\newtheorem{theorem}{Theorem}[section]
\newtheorem{corollary}{Corollary}[section]
\newcommand{\myparagraph}[1]{\vspace{0.3em} \noindent {\textbf{#1:}}}
\newcommand{\revision}[1]{#1}
\newcommand{\abs}[1]{\left| #1 \right|}
\newcommand{\set}[1]{\left\{ #1 \right\}}
\newcommand{\ignore}[1]{}
\newcommand{\query}{\textsc{Query}}
\begin{document}
\conferenceinfo{SIGMOD'13,} {June 22--27, 2013, New York, New York, USA.}
\CopyrightYear{2013}
\crdata{978-1-4503-2037-5/13/06}
\clubpenalty=10000
\widowpenalty=10000

\title{Fast Exact Shortest-Path Distance Queries on Large Networks by Pruned Landmark Labeling}

\numberofauthors{3}
\author{
\alignauthor
Takuya Akiba\\
\affaddr{The University of Tokyo}\\
\affaddr{Tokyo, 113-0033, Japan}\\
\email{t.akiba@is.s.u-tokyo.ac.jp}
\alignauthor
Yoichi Iwata\\
\affaddr{The University of Tokyo}\\
\affaddr{Tokyo, 113-0033, Japan}\\
\email{\hspace{-0.05em}y.iwata@is.s.u-tokyo.ac.jp\hspace{-0.05em}}
\alignauthor
Yuichi Yoshida\\
\affaddr{\hspace{-0.4em}National Institute of Informatics,\hspace{-0.4em}}\\
\affaddr{\hspace{-0.1em}Preferred Infrastructure, Inc.\hspace{-0.1em}}\\
\affaddr{Tokyo, 101-8430, Japan}\\
\email{yyoshida@nii.ac.jp}
}

\maketitle

\begin{abstract}
We propose a new exact method for shortest-path distance queries on large-scale networks.
Our method precomputes distance labels for vertices by performing a breadth-first search from every vertex.
Seemingly too obvious and too inefficient at first glance,
the key ingredient introduced here is \textit{pruning} during breadth-first searches.
While we can still answer the correct distance for any pair of vertices from the labels,
it surprisingly reduces the search space and sizes of labels.
Moreover, we show that we can perform 32 or 64 breadth-first searches simultaneously
exploiting bitwise operations.
We experimentally demonstrate that
the combination of these two techniques is efficient and robust on various kinds of large-scale real-world networks.
In particular, our method can handle social networks and web graphs with hundreds of millions of edges,
which are two orders of magnitude larger than the limits of previous exact methods,
with comparable query time to those of previous methods.

\end{abstract}

\category{E.1}{Data}{Data Structures}[Graphs and networks]
\terms{Algorithms, Experimentation, Performance}
\keywords{Graphs, shortest paths, query processing}

\section{Introduction}
\label{sec:introduction}
A \emph{distance query} asks the distance between two vertices in a graph.
Without doubt,
answering distance queries is one of the most fundamental operations on graphs,
and it has wide range of applications.
For example, on social networks,
distance between two users is considered to indicate the closeness,
and used in socially-sensitive search to help users to find more related users or
 contents~\cite{practice/landmarks/Vieira2007, apps/social_search/vldb2008},
or to analyze influential people and communities~\cite{apps/social_analysis/kdd03, apps/social_analysis/kdd06}.
On web graphs, distance between web pages is one of indicators of relevance,
and used in context-aware search to give higher ranks
to web pages more related to the currently visiting web
page~\cite{apps/context_search/cikm2008, practice/landmarks/Potamias2009}.
Other applications of distance queries include
top-$k$ keyword queries on linked data~\cite{apps/topk/sigmod07, apps/topk/icde09},
discovery of optimal pathways between compounds in metabolic
networks~\cite{apps/bio/Rahman2005,apps/bio/Rahman2006},
and management of resources in computer
networks~\cite{apps/comm/Pastor-Satorras2004,apps/comm/Boccaletti2006}.

Of course, we can compute the distance for each query by using a breadth first search (BFS) or Dijkstra's algorithm.
However, they take more than a second for large graphs,
which is too slow to use as a building block of these applications.
\revision{In particular,
applications such as socially-sensitive search or context-aware search
should have low latency since they involve real-time interactions between users,
while they need distances between a number of pairs of vertices to rank items for each search query.
Therefore, distance queries should be answered much more quickly, say, microseconds.}

The other extreme approach is to compute distances between all pairs of vertices beforehand and store them in an index.
Though we can answer distance queries instantly,
this approach is also unacceptable since preprocessing time and index size are quadratic and unrealistically large.
Due to the emergence of huge graph data,
design of more moderate and practical methods between these two extreme approaches
has been attracting strong interest in the database community~\cite{practice/exact/Cheng2009, practice/landmarks/Potamias2009, practice/exact/Wei2010, practice/landmarks/Tretyakov2011, practice/exact/Akiba2012, practice/landmarks/Qiao2012, practice/exact/Jin2012}.

Generally, there are two major graph classes of real-world networks:
one is road networks,
and the other is complex networks such as social networks, web graphs, biological networks and computer networks.
For road networks,
since it is easier to grasp and exploit structures of them,
research has been already very successful.
Now distance queries on road networks can be processed in less than one microsecond
for the complete road network of the USA~\cite{practice/road/Abraham2011}.

In contrast, answering distance queries on complex networks is still a highly challenging problem.
The methods for road networks do not perform well on these networks
since structures of them are totally different.
Several methods have been proposed for these networks,
but they suffer from drawback of scalability.
They take at least thousands of seconds or tens of thousands of seconds
to index networks with millions of edges~\cite{practice/exact/Wei2010, practice/exact/Akiba2012, practice/exact/Abraham2012, practice/exact/Jin2012}.

To handle larger complex networks,
apart from these exact methods,
approximate methods are also studied.
That is, we do not always have to answer correct distances.
They are successful
in terms of much better scalability and very small average relative error for random queries.
However,
some of these methods take milliseconds to answer queries~\cite{practice/landmarks/Gubichev2010, practice/landmarks/Tretyakov2011, practice/landmarks/Qiao2012},
which is about three orders of magnitude slower than other methods.
Some other methods answer queries in microseconds~\cite{practice/landmarks/Potamias2009, practice/landmarks/Vieira2007},
but it is reported that
precision of these methods for close pairs of vertices is not high~\cite{practice/landmarks/Qiao2012, practice/exact/Akiba2012}.
This drawback might be critical for applications such as socially-sensitive search or context-aware search since,
in these applications,
distance queries are employed to distinguish close items.

\subsection{Our Contributions}
To address these issues, in this paper,
we present a new method for answering distance queries in complex networks.
The proposed method is an exact method.
That is, it always answers exactly correct distance to queries.
It has much better scalability than previous exact methods and can handle graphs with hundreds of millions of edges.
Nevertheless, the query time is very small and around ten microseconds.
Though our method can handle directed and/or weighted graphs as we mention later,
in the following, we assume undirected, unweighted graphs for simplicity of exposition.

Our method is based on the notion of \textit{distance labeling} or \textit{distance-aware 2-hop cover}.
The idea of 2-hop cover is as follows.
For each vertex $u$,
we pick up a set $C(u)$ of candidate vertices so that every pair of vertices $(u,v)$ has at least one vertex $w \in C(u) \cap C(v)$ on a shortest path between the pair.
For each vertex $u$ and a vertex $w \in C(u)$,
we precompute the distance $d_G(u,w)$ between them.
We say that the set $L(u) = \{(w,d_G(u,w))\}_{w \in C(u)}$ is the \textit{label} of $u$.
Using labels,
it is clear that the distance $d_G(u,v)$ between two vertices $u$ and $v$ can be computed as $\min\{\delta + \delta' \mid (w,\delta) \in L(u), (w,\delta') \in L(v)\}$.
The family of labels $\{L(u)\}$ is called a \textit{2-hop cover}.
Distance labeling is also commonly used in
previous exact methods~\cite{practice/exact/Cohen2002, practice/exact/Cheng2009, practice/exact/Abraham2012, practice/exact/Jin2012},
but we propose a totally new and different approach
to compute the labels, referred to as the \textit{pruned landmark labeling}.

The idea of our method is simple and rather radical:
from every vertex,
we conduct a breadth-first search and add the distance information to labels of visited vertices.
Of course, if we naively implement this idea,
we need $O(nm)$ preprocessing time and $O(n^2)$ space to store the index, which is unacceptable.
Here, $n$ is the number of vertices and $m$ is the number of edges.
Our key idea to make this method practical is \textit{pruning} during the breadth-first searches.
Let $S$ be a set of vertices
and suppose that we already have labels that can answer
correct distance between two vertices if a shortest path between them passes through a vertex in $S$.
Suppose we are conducting a BFS from $v$ and visiting $u$.
If there is a vertex $w \in S$ such that $d_G(v, u) = d_G(v, w) + d_G(w, u)$,
then we \textit{prune} $u$.
That is, we do not traverse any edges from $u$.
As we prove in Section~\ref{sec:algorithm:proof},
after this pruned BFS from $v$,
the labels can answer the distance between two vertices if a shortest path between them
passes through a vertex in $S \cup \set{v}$.

Interestingly, our method combines the advantages of
three different previous successful approaches:
landmark-based approximate methods~\cite{practice/landmarks/Potamias2009, practice/landmarks/Tretyakov2011, practice/landmarks/Qiao2012},
tree-decomposition-based exact methods~\cite{practice/exact/Wei2010, practice/exact/Akiba2012},
and labeling-based exact methods~\cite{practice/exact/Cohen2002, practice/exact/Cheng2009, practice/exact/Abraham2012}.
Landmark-based approximate methods achieve remarkable precision
by leveraging the existence of highly \textit{central} vertices in complex networks~\cite{practice/landmarks/Potamias2009}.
This fact is also the main reason of the power of our pruning:
by conducting breadth-first searches from these central vertices first,
later we can drastically prune breadth-first searches.
Tree-decomposition-based methods
exploit the core--fringe structure
of networks~\cite{rel/core_fringe/physical00, rel/core_fringe/physical01}
by decomposing tree-like fringes of low tree-width.
Though our method does not explicitly use tree decompositions,
we prove that our method can efficiently process graphs of small tree-width.
This process indicates that our method also exploits the core--fringe structure.
As with other labeling-based methods,
the data structure of our index is simple
and query processing is very quick because of the locality of memory access.

{
\tabcolsep = 1.3mm
\begin{table}[tbp]
\center
\small
\caption{Summary of experimental results of previous methods and the proposed method for exact distance queries.
}

\begin{tabular}{|c|crr|rr|}
\hline
Method & Network & $|V|$ & $|E|$ & Indexing & Query \\
\hline\hline
TEDI & Computer & 22 K & 46 K & 17 s & 4.2 $\mu$s \\
\cite{practice/exact/Wei2010} & Social & 0.6 M & 0.6 M & 2,226 s & 55.0 $\mu$s \\
\hline
HCL & Social & 7.1 K & 0.1 M & 1,003 s & 28.2 $\mu$s \\
\cite{practice/exact/Jin2012} & Citation & 0.7 M & 0.3 M & 253,104 s & 0.2 $\mu$s \\
\hline
TD & Social & 0.3 M & 0.4 M & 9 s & 0.5 $\mu$s \\
\cite{practice/exact/Akiba2012} & Social & 2.4 M & 4.7 M & 2,473 s & 0.8 $\mu$s \\
\hline
HHL & Computer & 0.2 M & 1.2 M & 7,399 s & 3.1 $\mu$s \\
\cite{practice/exact/Abraham2012} & Social & 0.3 M & 1.9 M & 19,488 s & 6.9 $\mu$s \\
\hline
 & Web & 0.3 M & 1.5 M & \textbf{4 s} & 0.5 $\mu$s \\
\textbf{PLL} & Social & 2.4 M & 4.7 M & \textbf{61 s} & 0.6 $\mu$s \\
\cline{2-6}
\textbf{(this work)} & Social & 1.1 M & \textbf{114 M} & 15,164 s & 15.6 $\mu$s \\
 & Web & 7.4 M & \textbf{194 M} & 6,068 s & 4.1 $\mu$s \\

\hline

\end{tabular}
\label{tbl:methods}
\end{table}
}

Though this pruned landmark labeling scheme is already powerful by itself,
we propose another labeling scheme with a different kind of strength
and combine them to further improve the performance.
That is, we show that labeling by breadth-first search can be implemented in a bit-parallel way,
which exploits the property that the number of bits $b$ in a register word is typically 32 or 64
and we can perform bit manipulations on these $b$ bits simultaneously.
By this technique,
we can perform BFSs from $b + 1$ vertices simultaneously in $O(m)$ time.
In the beginning,
this bit-parallel labeling (without pruning) works better than the pruned landmark labeling since pruning does not happen much.
Note that we are not talking about thread-level parallelism,
and our bit-parallelism actually decreases the computational complexity by the factor of $b+1$.
We can also use thread-level parallelism in addition to these two labeling schemes.

As we confirm in our experimental results,
our method outperforms other state-of-the-art methods for exact distance queries.
In particular, it has significantly better scalability than previous methods.
It took only tens of seconds for indexing networks with millions of edges.
This indexing time is two orders of magnitude faster than previous methods,
which took at least thousands of seconds or even more than one day.
Moreover, our method successfully handled networks with hundreds of millions of edges,
which is again two orders of magnitude larger than networks that
have been previously used in experiments of exact methods.
The query time is also better than previous methods for networks with the same size,
and we confirmed that the query time does not increase rapidly against sizes of networks.
We also confirm the size of an index of our method is comparable to other methods.

In Table~\ref{tbl:methods}, we summarize our experimental results
and those of previous exact methods presented in these papers.
We listed the results for the largest two real-world complex networks from each paper.
In our experiments,
we further compare our method
with hierarchical hub labeling~\cite{practice/exact/Abraham2012} and
the tree-decomposition-based method~\cite{practice/exact/Akiba2012}.

In Section~\ref{sec:related_work},
we describe related works on exact and approximate distance queries.
In Section~\ref{sec:preliminaries},
we give definitions and notions used in this paper.
Section~\ref{sec:algorithm} is devoted to explain our first scheme,
the pruned landmark labeling.
We explain our second scheme, the bit-parallel labeling,
in Section~\ref{sec:bit-parallel}.
In Section~\ref{sec:variants},
we mention variants of distance queries we can handle by slightly modifying our method.
We explain our experimental results in Section~\ref{sec:experiments},
and conclude in Section~\ref{sec:conclusions}.

\ignore{
Another advantage of our method is simplicity.
We can implement our method within hundreds of lines of codes in C++.
Also, our method does not need any parameter as opposed to other methods.
}

\ignore{
Finally,
we mention that,
except for choosing vertices in the decreasing order of degrees,
our method implicitly exploits particular properties of complex networks.
Other methods often try to use use properties of complex networks more explicitly.
For example,
it is known that complex networks have so-called the core-fringe structure,
which means that after removing a huge high-connected component, called the core,
the resulting connected components, called fringes,
look like trees.
To exploit this property,~\cite{practice/exact/Wei2010, practice/exact/Akiba2012} proposed to use tree decompositions.
However, such approaches often causes the curse of parameter tuning,
and it is the case for~\cite{practice/exact/Wei2010, practice/exact/Akiba2012}.
}

\ignore{
We summarize our contributions below.
\begin{description}
  \item[High scalability and small query time ---]
    We show that our method has high scalability by showing that our method can preprocess networks of billions of edges.
    Furthermore, we can answer distance queries in microseconds,
    which is satisfactory for using our method as a building block of other applications.
    Our method is an exact method and and it only uses several times more space than previous methods,
    even including \emph{approximate} methods.
  \item[Simplicity ---]
    We can implement our method within hundreds of lines of codes in C++.

  \item[Exploiting network structure ---]
    We exploit properties of complex networks implicitly.
    Hence, we can avoid tuning parameters,
    which is often a difficult task.
\end{description}
}

\ignore{
Since real-world networks often have certain structures,
it is natural to use these structures to design more efficient methods for distance queries.
Roughly speaking,
there are two kinds of networks in the real world.
One is road-networks and the other is complex networks.

One important application of distance query is road-network.
Shortest-path query and distance query has application in car navigation and transfer guide.
There are many work exploiting (almost) planarity of road-networks~\cite{...}.
Also, since road-network often have highways, there are methods exploiting highways.
However, these method do not work well for non-spatial networks.
}

\section{Related Work}
\label{sec:related_work}
\subsection{Exact Methods}
\label{sec:related_work:exact}

For exact distance queries on complex networks such as social networks and web graphs,
several methods are proposed recently.

Large portion of these methods can be considered as
based on the idea of 2-hop cover~\cite{practice/exact/Cohen2002}.
Finding small 2-hop covers efficiently is a challenging and long-standing problem~\cite{practice/exact/Cohen2002, practice/exact/Cheng2009, practice/exact/Abraham2012}.
One of the latest methods is \textit{hierarchical hub labeling}~\cite{practice/exact/Abraham2012},
which is based on a method for road networks~\cite{practice/road/Abraham2011}.
Another latest method related to 2-hop cover is \textit{highway-centric labeling}~\cite{practice/exact/Jin2012}.
In this method,
we first compute a spanning tree $T$ and use it as a ``highway''.
That is,
when computing distance $d_G(u,v)$ between two vertices $u$ and $v$,
we output the minimum over $d_G(u,w_1) + d_T(w_1,w_2) + d_G(w_2,v)$ where $w_1$ and $w_2$ are vertices in labels of $u$ and $v$, respectively,
and $d_T(\cdot,\cdot)$ is the distance metric on the spanning tree $T$.

An approach based on \textit{tree decompositions} is also reported to be efficient~\cite{practice/exact/Wei2010, practice/exact/Akiba2012}.
A tree decomposition of a graph $G$ is a tree $T$ with each vertex associated with a set of vertices in $G$, called a \emph{bag}~\cite{theory/td/Robertson1984}.
Also, the set of bags containing a vertex in $G$ forms a connected component in $T$.
It heuristically computes a tree decompositions and stores shortest-distance matrices for each bag.
It is not hard to compute distances from this information.
The smaller the size of the largest bag is, the more efficient this method is.
Because of the core--fringe structure of the networks~\cite{rel/core_fringe/physical00
, rel/core_fringe/physical01
},
these networks can be decomposed into one big bag and many small bags,
and the size of the largest bag is moderate though not small.

\subsection{Approximate Methods}
\label{sec:rel_approx}
To gain more scalability than these exact methods, approximate methods,
which do not always answer correct distances, also have been studied.

The major approach is the
\textit{landmark-based approach}~\cite{practice/landmarks/Tang2003, practice/landmarks/Vieira2007}.
The basic idea of these methods is to select a subset $L$ of vertices as landmarks,
and precompute the distance $d_G(\ell,u)$ between each landmark $\ell \in L$ and all the vertices $u \in V$.
When the distance between two vertices $u$ and $v$ is queried,
we answer the minimum $d_G(u,\ell) + d_G(\ell, v)$ over landmarks $\ell \in L$ as an estimate.
Generally, the precision for each query depends on whether actual shortest paths pass nearby the landmarks.
Therefore, by selecting central vertices as landmarks,
the accuracy of estimates becomes much better than selecting landmarks randomly~\cite{practice/landmarks/Potamias2009, theory/power_law}.
However, for close pairs, the precision is still much worse than the average,
since lengths of shortest paths between them are small and they are unlikely to pass nearby the landmarks~\cite{practice/exact/Akiba2012}.

To further improve the accuracy,
several techniques were proposed~\cite{practice/landmarks/Gubichev2010, practice/landmarks/Tretyakov2011, practice/landmarks/Qiao2012}.
They typically store shortest-path trees rooted at the landmarks
instead of just storing distances from the landmarks.
To answer queries, they extract paths from the shortest-path trees as candidates of shortest-paths,
and improve them by finding loops or shortcuts.
While they significantly improve the accuracy,
the query time becomes up to three orders of magnitude slower.

\section{Preliminaries}
\label{sec:preliminaries}
\subsection{Notations}
Table~\ref{tbl:notations} lists the notations that are frequently used in this paper.
In this paper, we mainly focus on networks that are modeled as graphs.
Let $G=(V, E)$ be a graph with vertex set $V$ and edge set $E$.
We use symbols $n$ and $m$ to denote the number of vertices $|V|$ and the number of edges $|E|$, respectively,
when the graph is clear from the context.
We also denote the vertex set of $G$ by $V(G)$ and the edge set of $G$ by $E(G)$.
We denote the neighbors of a vertex $v \in V$ by $N_G(v)$.
That is, $N_G(v) = \set{u \in V \mid (u, v) \in E}$.

Let $d_G(u,v)$ denote the distance between vertices $u,v$.
If %
$u$ and $v$ are disconnected in $G$,
we define $d_G(u, v) = \infty$.
The distance in graphs is a metric, thus it satisfies the triangle inequalities.
That is, for any three vertices $s$, $t$ and $v$,
\begin{align}
\label{eq:triangle1}
d_G(s, t) &\leq d_G(s, v) + d_G(v, t), \\
\label{eq:triangle2}
d_G(s, t) &\geq \abs{d_G(s, v) - d_G(v, t)}.
\end{align}

We define $P_G(s, t) \subseteq V$
as the set of all vertices on the shortest paths between vertices $s$ and $t$.
In other words,
$$
P_G(s, t) = \set{v \in V \mid d_G(s, v) + d_G(v, t) = d_G(s, t)}.
$$

\subsection{Problem Definition}
This paper studies the following problem:
given a graph $G$, construct an index to efficiently answer distance queries,
which asks the distance between an arbitrary pair of vertices.

For simplicity of exposition, we mainly consider undirected, unweighted graphs.
However, our algorithm can be easily extended for directed and/or weighted graphs,
and we discuss about this extension in Section~\ref{sec:variants}.
Furthermore, our method can answer not only distances but also shortest-paths.
This extension is also discussed in Section~\ref{sec:variants}.

\iftrue
\begin{table}[tb]
\small
\centering
\caption{Frequently used notations.}
\label{tbl:notations}
\centering
\small
\begin{tabular}{|l|l|}
\hline
Notation & Description \\
\hline
\hline
$G = (V, E)$ & A graph \\
$n$ & Number of vertices in graph $G$ \\
$m$ & Number of edges in graph $G$ \\
$N_G(v)$ & Neighbors of vertex $v$ in graph $G$ \\
$d_G(u, v)$ & Distance between vertex $u$ and $v$ in graph $G$\\
\multirow{2}*{$P_G(u, v)$} & Set of all the vertices on the shortest paths \\
           & between vertex $u$ and $v$ in graph $G$\\
\hline
\end{tabular}
\end{table}
\fi

\subsection{Labels and 2-Hop Cover}
\label{sec:preliminaries_2hop}
The general framework of \textit{2-hop cover}~\cite{practice/exact/Cohen2002, practice/exact/Cheng2009, practice/exact/Abraham2012},
or sometimes called a \textit{labeling method}, is as follows.
Our method also follows this framework.

For each vertex $v$, we precompute a \textit{label} denoted as $L(v)$,
which is a set of pairs $(u, \delta_{uv})$,
where $u$ is a vertex and $\delta_{uv} = d_G(u, v)$.
We sometimes call the set of labels $\{L(v)\}_{v \in V}$ as an \textit{index}.
To answer a distance query between vertices $s$ and $t$,
we compute and answer $\query(s, t, L)$ defined as follows,
\begin{align*}
&\textsc{Query}(s, t, L) = \\
&\quad\quad \min\left\{ \delta_{vs} + \delta_{vt} \mid (v, \delta_{vs}) \in L(s), (v, \delta_{vt}) \in L(t) \right\}.
\end{align*}
We define $\textsc{Query}(s, t, L) = \infty$
if $L(s)$ and $L(t)$ do not share any vertex.
We call $L$ a \textit{(distance-aware) 2-hop cover} of $G$
if $\query(s, t, L) = d_G(s, t)$ for any pair of vertices $s$ and $t$.

For each vertex $v$,
we store the label $L(v)$ so that pairs in it are sorted by their vertices.
Then, we can compute $\query(s, t, L)$ in $O(\abs{L(s)} +\abs{L(t)})$ time using a merge-join-like algorithm.

\section{Algorithm Description}
\label{sec:algorithm}
\subsection{Naive Landmark Labeling}
\label{sec:algo_naive}
We start with the following naive method.
As the index, we conduct a BFS from each vertex and store distances between all pairs.
Though this method is too obvious and inefficient,
for the exposition of the next method, we explain the details.

Let $V = \{v_1, v_2, \ldots, v_n\}$.
We start with an empty index $L_0$,
where $L_0(u) = \emptyset$ for every $u \in V$.
Suppose we conduct BFSs from vertices in the order of $v_1, v_2, \ldots, v_n$.
After the $k$-th BFS from a vertex $v_k$,
we add distances from $v_k$ to labels of reached vertices, that is,
$L_{k}(u) = L_{k-1}(u) \cup \set{(v_k, d_G(v_k, u))}$ for each $u \in V$ with $d_G(v_k, u) \neq \infty$.
We do not change labels for unreached vertices,
that is, $L_k(u) = L_{k-1}(u)$ for every $u \in V$ with $d_G(v_k, u) = \infty$.

$L_n$ is the final index.
Obviously $\textsc{Query}(s, t, L_n) = d_G(s, t)$ for any pair of vertices $s$ and $t$, and therefore,
$L_n$ is a correct 2-hop cover for exact distance queries.
This is because, if $s$ and $t$ are reachable,
then $(s,  0) \in L_n(s)$ and $(s, d_G(s, t)) \in L_n(t)$ for example.

This method can be considered as a variant of landmark-based approximate methods,
which we mentioned in Section~\ref{sec:rel_approx}.
The standard landmark-based method can be regarded as
a method that precomputes $L_l$ instead of $L_n$
and estimates distance between $s$ and $t$ by $\query(s, t, L_l)$,
where $l \ll n$ is a parameter expressing the number of landmarks.

\newcommand{\FigureExample}[2]{
\subfloat[#2]{
\hspace{0.37em}
\epsfig{file=example/#1.dot.eps, height=29mm}\label{fig:example_pbfs_#1}
\hspace{0.37em}
}
\hspace{0.01em}
}
\begin{figure*}[t]
  \centering
  \FigureExample{1}{First BFS from vertex 1. We visited all the vertices.}
  \FigureExample{2}{Second BFS from vertex 2. We did not add labels to five vertices.}
  \FigureExample{3}{Third BFS from vertex 3. We only visited the lower half of the vertices.}
  \FigureExample{4}{Fourth BFS from vertex 4. This time we only visited the higher half.}
  \FigureExample{5}{Fifth BFS from vertex 5. The search space was even smaller.}
\vspace{-0.5em}
  \caption{Examples of pruned BFSs. Yellow vertices denote the roots, blue vertices denote those which we visited
and labeled, red vertices denote those which we visited but pruned,
and gray vertices denote those which are already used as roots.}
  \label{fig:example_pbfs}
\end{figure*}

\subsection{Pruned Landmark Labeling}
\label{sec:algo_pruning}
Then, we introduce \textit{pruning} to the naive method.
Similarly to the method above,
we conduct \textit{pruned} BFSs from vertices in the order of $v_1, v_2, \ldots, v_n$.
We start with an empty index $L'_{0}$
and create an index $L'_{k}$ from $L'_{k-1}$ using the information obtained by the $k$-th pruned BFS from vertex $v_k$.

We prune BFSs as follows.
Suppose that we have an index $L'_{k-1}$ and we are conducting a BFS from $v_k$ to create a new index $L'_k$.
Suppose that we are visiting a vertex $u$ with distance $\delta$.
If $\textsc{Query}(v_k, u, L'_{k-1}) \leq \delta$,
then we prune $u$, that is,
we do not add $(v_k, \delta)$ to $L'_k(u)$ (i.e. $L'_k(u) = L'_{k-1}(u)$)
and we do not traverse any edge from vertex $u$.
Otherwise, we set $L'_k(u) = L'_{k-1}(u) \cup \set{(v_k, \delta)}$
and traverse all the edges from the vertex $u$ as usual.
As with the previous method,
we also set $L'_k(u) = L'_{k-1}(u)$ for all vertices $u \in V$ that were not visited in the $k$-th pruned BFS.
This algorithm, performing pruned BFSs,
is described as Algorithm~\ref{alg:pruned_bfs},
and the whole preprocessing algorithm is described as Algorithm~\ref{alg:total}.

Figure~\ref{fig:example_pbfs} shows examples of pruned BFSs.
The first pruned BFS from vertex 1 visits all the vertices (Figure~\ref{fig:example_pbfs_1}).
During the next pruned BFS from vertex 2 (Figure~\ref{fig:example_pbfs_2}), when we visit vertex 6,
since $\query(2, 6, L'_1) = d_G(2, 1) + d_G(1, 6) = 3 = d_G(2, 6)$,
we prune vertex 6 and we do not traverse edges from it. We also prune vertices 1 and 12.
As the number of performed BFSs increases,
we can confirm that the search space gets smaller and smaller
(Figure~\ref{fig:example_pbfs_3},\ref{fig:example_pbfs_4} and \ref{fig:example_pbfs_5}).

\begin{algorithm}[t]
\caption{Pruned BFS from $v_k \in V$ to create index $L'_k$.}
\label{alg:pruned_bfs}
\begin{algorithmic}[1]
\Procedure{PrunedBFS}{$G$, $v_k$, $L'_{k-1}$}
  \State $Q \gets $ a queue with only one element $v_k$.
  \State $P[v_k] \gets 0$ and $P[v] \gets \infty$ for all $v \in V(G) \setminus \set{v_k}$.
  \label{line:init}
  \State $L'_k[v] \gets L'_{k-1}[v]$ for all $v \in V(G)$.
  \While{$Q$ is not empty}
    \State Dequeue $u$ from $Q$.
    \If{\Call{Query}{$v_k$, $u$, $L'_{k-1}$} $ \leq P[u]$} \label{line:query}
      \State \textbf{continue}
    \EndIf
    \State $L'_k[u] \gets L'_{k-1}[u] \cup \set{(v_k , P[v_k])}$
    \ForAll{$w \in N_G(v)$ s.t.\ $P[w] = \infty$}
      \State $P[w] \gets P[u] + 1$.
      \State Enqueue $w$ onto $Q$.
    \EndFor
  \EndWhile
  \State \Return $L'_k$
\EndProcedure
\end{algorithmic}
\end{algorithm}

\begin{algorithm}[t]
\caption{Compute a 2-hop cover index by pruned BFS.}
\label{alg:total}
\begin{algorithmic}[1]
\Procedure{Preprocess}{$G$}
  \State $L'_0[v] \gets \emptyset$ for all $v \in V(G)$.
  \For{$k = 1, 2, \ldots, n$}
    \State $L'_k \gets$ \Call{PrunedBFS}{$G$, $v_k$, $L'_{k-1}$}
  \EndFor
  \State \Return $L'_n$
\EndProcedure
\end{algorithmic}
\end{algorithm}

\subsection{Proof of Correctness}
\label{sec:algorithm:proof}
In the following, we prove that this method computes a correct 2-hop cover index,
that is, $\textsc{Query}(s, t, L'_n) = d_G(s, t)$
for any pair of vertices $s$ and $t$.

\begin{theorem}
\label{theorem:correctness}
For any $0 \leq k \leq n$ and for any pair of vertices $s$ and $t$,
$\textsc{Query}(s, t, L'_k) = \textsc{Query}(s, t, L_k).$
\end{theorem}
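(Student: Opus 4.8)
The plan is to prove this by induction on $k$, showing that the pruned index $L'_k$ and the naive index $L_k$ induce the same \query{} answers for every pair, even though $L'_k$ is a subset of $L_k$ in terms of stored pairs. The base case $k=0$ is immediate since $L'_0 = L_0$ is the empty index, so both queries return $\infty$. For the inductive step, I would assume $\query(s,t,L'_{k-1}) = \query(s,t,L_{k-1})$ for all pairs $s,t$ and prove the equality persists after the $k$-th BFS.

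The key observation is that the two indices differ only in pairs whose vertex is $v_k$: the naive BFS adds $(v_k, d_G(v_k,u))$ to every reachable $u$, whereas the pruned BFS omits this pair exactly when $u$ was pruned, i.e.\ when $\query(v_k, u, L'_{k-1}) \le d_G(v_k, u)$. So I would fix a pair $s,t$ and compare the minima. Since $L'_k \subseteq L_k$ (as sets of pairs), we immediately get $\query(s,t,L'_k) \ge \query(s,t,L_k)$; the real work is the reverse inequality, showing no shortest "hop" through $v_k$ is lost. The only way $L_k$ could beat $L'_k$ is if $v_k$ is the sole common vertex witnessing a shorter value, say $(v_k,\delta_{v_k s}) \in L_k(s)$ and $(v_k,\delta_{v_k t}) \in L_k(t)$ with $\delta_{v_k s} + \delta_{v_k t} < \query(s,t,L'_k)$, but at least one of $s,t$ was pruned during the $k$-th BFS and so lacks the $v_k$ pair in $L'_k$.

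The heart of the argument is to handle this pruning case. Suppose $s$ was pruned, meaning $\query(v_k, s, L'_{k-1}) \le d_G(v_k, s) = \delta_{v_k s}$. By the induction hypothesis this equals $\query(v_k, s, L_{k-1})$, so there is a common vertex $w$ with $(w,\alpha) \in L_{k-1}(v_k)$ and $(w,\beta) \in L_{k-1}(s)$ where $\alpha + \beta \le \delta_{v_k s}$. The plan is to use $w$ as an alternate hop: I would combine $\beta = d_G(w,s)$ on the $s$ side with $\alpha + \delta_{v_k t} = d_G(w,v_k) + d_G(v_k,t) \ge d_G(w,t)$ on the $t$ side, using the triangle inequality~\eqref{eq:triangle1} to bound the distance from $w$ to $t$, and then argue this $w$-hop (already present in $L'_{k}$ via the induction hypothesis and the fact that labels only grow) gives a value no larger than $\delta_{v_k s} + \delta_{v_k t}$. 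The subtlety I expect to be the main obstacle is the bookkeeping needed when \emph{both} $s$ and $t$ are pruned, and ensuring the witnessing pairs survive into $L'_k$: one must verify that the alternate common vertex $w$ actually appears in the labels $L'_k(s)$ and $L'_k(t)$ with the correct distances, which follows because earlier BFSs are unaffected by the $k$-th step and $\query(v_k,s,L'_{k-1}) = \query(v_k,s,L'_k)$ since the $k$-th BFS cannot improve a query involving its own root against an already-pruned vertex. Carefully tracking that the value $\delta_{v_k s} + \delta_{v_k t}$ is matched or beaten by a surviving hop, in all pruning subcases, is where the argument needs the most care.
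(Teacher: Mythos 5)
Your overall strategy (induction on $k$, reduce to the case where the new hub $v_k$ strictly improves the query, then exhibit a surviving alternate hub) is reasonable, but the step you yourself flag as the ``main obstacle'' is a genuine gap, and your proposed justification does not close it. Suppose $s$ is pruned in the $k$-th BFS and you extract a witness $w=v_i$ ($i<k$) with $(w,\alpha)\in L'_{k-1}(v_k)$ and $(w,\beta)\in L'_{k-1}(s)$ and $\alpha+\beta\le \delta_{v_k s}$. The triangle inequality does give $d_G(s,w)+d_G(w,t)\le \delta_{v_k s}+\delta_{v_k t}$, and $w$ survives into $L'_k(s)$ because labels only grow. But nothing guarantees $(w, d_G(w,t))\in L'_k(t)$: the pair $(w,t)$ may itself have been pruned during the $i$-th BFS because some still earlier vertex covers it. Your remark that ``earlier BFSs are unaffected by the $k$-th step'' only explains why $w$ persists in $s$'s label; it says nothing about $t$'s label, which is where the argument breaks. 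One could try to iterate the replacement (if $w\notin L'(t)$, pass to the vertex that covered $(w,t)$, and so on), but then the new hub must be re-checked against $s$'s label, and you would need a termination/monotonicity argument you have not supplied. There is also a secondary inaccuracy: the pair $(v_k,\cdot)$ can be absent from $L'_k(u)$ not only because the pruning test at $u$ succeeded, but also because $u$ was never discovered (or was discovered at an inflated tentative distance) due to pruning of intermediate vertices; your case analysis covers only the first possibility.

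The paper closes exactly this gap with a global choice rather than a local repair: for the fixed pair $(s,t)$ it takes $j$ to be the \emph{smallest} index such that $v_j$ attains $\query(s,t,L_k)$, shows by minimality that no $v_i$ with $i<j$ lies on $P_G(v_j,s)$ (or, symmetrically, on $P_G(v_j,t)$), and then proves the stronger statement that the $j$-th pruned BFS labels \emph{every} vertex of $P_G(v_j,s)$ with $(v_j,d_G(v_j,\cdot))$, since $\query(v_j,u,L'_{j-1})>d_G(v_j,u)$ for all such $u$. This handles both endpoints and all intermediate vertices simultaneously, which is precisely what your single-replacement step cannot do. To repair your proof you would need to import that idea (or an equivalent minimal-counterexample descent over hub indices).
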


\begin{proof}
We prove the theorem by mathematical induction on $k$.
Since $L'_0 = L_0$, it is true for $k=0$.
Now we assume it holds for $0, 1, \ldots, k-1$ and prove it also holds for $k$.

Let $s, t$ be a pair of vertices.
We assume these vertices are reachable in $G$,
since otherwise the answer $\infty$ can be obviously obtained.
Let $j$ be the smallest number such that
$(v_j , \delta_{v_j s}) \in L_k(s), (v_j , \delta_{v_j t}) \in L_k(t)$
and $\delta_{v_j s} + \delta_{v_j t} = \textsc{Query}(s, t, L_k)$.
We prove that $(v_j , \delta_{v_j s})$ and $(v_j , \delta_{v_j t})$
are also included in $L'_k(s)$ and $L'_k(t)$.
This immediately leads to $\textsc{Query}(s, t, L'_k) = \textsc{Query}(s, t, L_k)$.
Due to the symmetry between $s$ and $t$,
we prove $(v_j , \delta_{v_j s}) \in L'_k(s)$.

First, for any $i < j$,
we prove by contradiction that $v_i \not\in P_G(v_j, s)$.
If we assume $v_i \in P_G(v_j, s)$,
from Inequality~\ref{eq:triangle1}
\begin{align*}
\query(s, t, L_k)
&= d_G(s, v_j) + d_G(v_j, t) \\
&= d_G(s, v_i) + d_G(v_i, v_j) + d_G(v_j, t) \\
&\geq d_G(s, v_i) + d_G(v_i, t).
\end{align*}
Since $(v_i , d_G(s, v_i)) \in L_k(s)$ and $(v_i , d_G(t, v_i)) \in L_k(t)$,
this contradicts to the assumption of the minimality of $j$.
Therefore, $v_i \not\in P_G(v_j, s)$ holds for any $i < j$.

Now we prove that $(v_j, d_G(v_j, s)) \in L'_k(s)$.
Actually, we prove a more general fact: $(v_j, d_G(v_j, u)) \in L'_k(u)$ for all $u \in P_G(v_j, s)$.
Note that $s \in P_G(v_j, s)$.
Suppose that we are conducting the $j$-th pruned BFS from $v_j$ to create $L_j$.
Let $u \in P_G(v_j, s)$.
Since $P_G(v_j, u) \subseteq P_G(v_j, s)$ and $v_i \not \in P_G(v_j,s)$ for any $i < j$,
we have $v_i \not\in P_G(v_j, u)$ for any $i < j$.
Therefore, $\query(v_j, u, L'_{j-1}) > d_G(v_j, u)$ holds.
Thus, we visit all vertices $u \in P_G(v_j, s)$ without pruning,
and it follows that $(v_j, d_G(v_j, u)) \in L'_j(u) \subseteq L'_k(u)$.
\end{proof}

As a corollary, our method is proved to be an exact distance querying method
by instantiating the theorem with $k=n$.

\begin{corollary}
For any pair of vertices $s$ and $t$,
$$\query(s, t, L'_n) = d_G(s, t).$$
\end{corollary}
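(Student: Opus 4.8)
The plan is to prove by induction on $k$ that pruning does not change the query answers, i.e.\ $\query(s, t, L'_k) = \query(s, t, L_k)$ for all $k$ and all pairs $s, t$. The corollary then follows immediately by taking $k = n$, since $L_n$ is already known to be a correct 2-hop cover and hence $\query(s, t, L'_n) = \query(s, t, L_n) = d_G(s, t)$. The base case $k = 0$ is trivial because $L'_0 = L_0 = \emptyset$. For the inductive step, I would assume the claim for all indices up to $k-1$ and establish it for $k$.

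Fixing a reachable pair $s, t$, the natural strategy is to take the \emph{smallest-indexed} landmark $v_j$ that witnesses the optimal value of $\query(s, t, L_k)$, and argue that its two contributing label entries survive in the pruned index, i.e.\ $(v_j, d_G(v_j, s)) \in L'_k(s)$ and symmetrically for $t$. Since one inclusion gives the other by symmetry, I would focus on the $s$ side. The first key step is to show, by contradiction using the minimality of $j$ together with the triangle inequality~(\ref{eq:triangle1}), that no earlier landmark $v_i$ with $i < j$ lies on a shortest path between $v_j$ and $s$; otherwise $v_i$ would furnish an equally good or better witness with a smaller index, violating minimality.

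The heart of the argument is to show that the $j$-th pruned BFS actually reaches $s$ (and records its distance) without pruning it. Rather than tracking $s$ alone, the cleaner route is to prove the stronger statement that \emph{every} vertex $u \in P_G(v_j, s)$ on a shortest $v_j$–$s$ path is visited and labeled, namely $(v_j, d_G(v_j, u)) \in L'_j(u)$. The point is that shortest paths are prefix-closed: $P_G(v_j, u) \subseteq P_G(v_j, s)$, so the fact that no $v_i$ ($i<j$) lies on $P_G(v_j, s)$ transfers to $P_G(v_j, u)$. This means the pruning test $\query(v_j, u, L'_{j-1}) \le d_G(v_j, u)$ fails at each such $u$—no already-processed landmark covers the pair $(v_j, u)$ at the correct distance—so $u$ is never pruned, and the BFS propagates all the way along the shortest path to $s$.

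The main obstacle I anticipate is the subtle interplay in the pruning test: I must be careful to argue that $\query(v_j, u, L'_{j-1})$ is \emph{strictly greater} than $d_G(v_j, u)$, not merely that it is not smaller. This requires knowing that the covering condition can only be met through a landmark $v_i$ with $i < j$ lying on a shortest $v_j$–$u$ path, which is exactly what the prefix-closure observation rules out; combined with the inductive hypothesis (so that $L'_{j-1}$ answers the same as $L_{j-1}$ for the pair $v_j, u$), this yields the strict inequality and guarantees no premature pruning. Once that is secured, the inclusion $(v_j, d_G(v_j, u)) \in L'_j(u) \subseteq L'_k(u)$ follows, completing the induction and thus the corollary.
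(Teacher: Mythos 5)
Your proposal is correct and follows essentially the same route as the paper: the paper proves the corollary by instantiating Theorem~\ref{theorem:correctness} at $k=n$, and that theorem's proof is exactly your induction — pick the smallest-indexed witness $v_j$, rule out earlier landmarks on $P_G(v_j,s)$ via the triangle inequality and minimality of $j$, and use prefix-closure of shortest paths plus the inductive hypothesis to get the strict inequality $\query(v_j,u,L'_{j-1}) > d_G(v_j,u)$ that blocks pruning. The subtlety you flag about strictness is handled in the paper in precisely the way you describe.
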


\subsection{Vertex Ordering Strategies}
\label{sec:vertex_order}
\subsubsection{Motivation}
In the algorithm description above,
we conducted pruned BFSs from vertices in the order of $v_1, v_2, \ldots, v_n$.
We can freely choose the order,
and moreover it turns out that the order is crucial for the performance of this method
as we will see in the experimental results presented in Section~\ref{sec:exp_ordering}.

To decide the order of vertices,
we should select \textit{central} vertices first in the sense that many shortest paths pass through these vertices.
Since we would like to prune later BFSs as much as possible,
we want to \textit{cover} larger part of pairs of vertices by earlier BFSs.
That is, the earlier labels should offer correct distances for as many pairs of vertices as possible,
and therefore the earlier vertices should be those who many shortest paths passes through.

This problem is quite similar to the problem of selecting good landmarks
for landmark-based approximate methods,
which is discussed well in~\cite{practice/landmarks/Potamias2009}.
In that problem, we also want to select good landmarks
so that many shortest path passes through these vertices or nearby vertices.

\subsubsection{Strategies}
Based on the results on landmark-based methods~\cite{practice/landmarks/Potamias2009},
we propose and examine the following three strategies.
In experiments, we basically use the \textsc{Degree} strategy,
and compare them empirically in Section~\ref{sec:exp_ordering}.

\vspace{0.3em}
\noindent\textsc{Random:}
We order vertices randomly.
We use this method as a baseline to
show the significance of other strategies.

\vspace{0.3em}
\noindent\textsc{Degree:}
We order vertices from those with higher degree.
The idea behind this strategy is that vertices with higher degree have stronger connection to many other vertices
and therefore many shortest paths would pass through them.

\vspace{0.3em}
\noindent\textsc{Closeness:}
We order vertices from those with the highest closeness centrality.
Since computing exact closeness centrality for all vertices costs $O(nm)$ time,
which is too expensive for large-scale networks,
we approximate closeness centrality by randomly sampling a small number of vertices
and computing distances from those vertices to all vertices.

\subsection{Efficient Implementation}

\subsubsection{Preprocessing (Algorithm~\ref{alg:pruned_bfs})}
\label{sec:implementation_preprocessing}

\noindent \textbf{Index:}
First, in the description above,
we treated $L'_{k-1}$ and $L'_k$ separately and explained as if we copy $L'_{k-1}$ to $L'_k$
for simplicity of explanation.
However, this copy can be easily avoided by keeping only one index
and adding labels to it after each pruned BFS.

\myparagraph{Initialization}
Another important note is to avoid $O(n)$ time initialization  for each pruned BFS.
The reason why this method is efficient is
that the search space of pruned BFSs gets much more smaller than the whole graph.
However if we spend $O(n)$ time for initialization, it would be the bottleneck.
What we want to do in the initialization is to
set all values in the array storing tentative distances as $\infty$
(Line~\ref{line:init}).
We can avoid $O(n)$ time initialization as follows.
Before we conduct the first pruned BFS,
we set all values in the array $P$ as $\infty$.
(This takes $O(n)$ time but we do this only once.)
Then, during each pruned BFS, we store all vertices we visited,
and after each pruned BFS,
we set $P[v]$ as $\infty$ for all each vertex $v$ we have visited.

\myparagraph{Arrays}
For the array storing tentative distances,
it is better to use 8-bit integers.
Since networks of our interest are small-world networks, 8-bit integers are enough to represent distances.
Using 8-bit integers, the array fits into low-level cache memories of recent computers,
resulting in the speed up by reducing cache misses.

\myparagraph{Querying}
To evaluate queries for pruning (Line~\ref{line:query}),
it is faster to use an algorithm different from the normal one since we can exploit the fact here that we issue many queries whose one endpoint is always $v_k$.
Before starting the $k$-th pruned BFS from $v_k$,
we prepare an array $T$ of length $n$ initialized with $\infty$
and set $T[w] = \delta_{w v_k}$ for all $(w, \delta_{w v_k}) \in L'_{k-1}(v_k)$.
To evaluate $\query(v_k, u, L'_{k-1})$,
for all $(w, \delta_{w u}) \in L'_{k-1}(u)$,
we compute $\delta_{w u} + T[w]$
and return the minimum.
Though normal querying algorithm takes $O(\abs{L'_{k-1}(v_k)} + \abs{L'_{k-1}(u)})$ time,
this algorithm runs in $O(\abs{L'_{k-1}(u)})$ time.
As Line~\ref{line:query} is the bottleneck of the algorithm,
this technique speeds up preprocessing by about twice.
Note that $T$ should be represented by 8-bit integers as the same reason described above,
and $O(n)$ time initialization for array $T$ should be avoided in the same way for array $P$.

\myparagraph{Prefetching}
Unfortunately, we cannot fit the index and the adjacency lists into the cache memory
for large-scale networks.
However, we can manually prefetch them to reduce the cache misses,
since vertices which we will access soon are in the queue.
Manual prefetching speeds up preprocessing by about 20\%.

\myparagraph{Thread-Level Parallelism}
As with parallel BFS algorithms~\cite{rel/parallel_bfs},
the pruned BFS algorithm can be also parallelized.
However, for simple experimental analysis
and fair comparison to previous methods,
we did not parallelize our implementation in the experiments.

\revision{
\myparagraph{Sorting Labels}
When applying merge-join-like algorithms to answer queries,
pairs in labels need to be sorted by vertices.
However, actually we do not need to sort explicitly %
by storing ranks of vertices instead of vertices.
That is, when adding a pair $(u, \delta)$ in the $i$-th pruned BFS from vertex $u$,
we add a pair $(i, \delta)$ instead.
Then, since pairs are added from vertices with lower rank to those with higher rank,
all the labels are automatically sorted.

}

\subsubsection{Querying}
\noindent \textbf{Sentinel:}
We add a dummy entry, $(n, \infty)$, to the label $L(v)$ for each $v \in V$.
This dummy entry ensures that we find the same vertices, $n$, in the end when scanning two labels.
Thus we can avoid to separately test whether we have scanned till the end.

\myparagraph{Arrays}
For each label $L(v)$,
it is faster to store the array for vertices and the array for distances separately since distances are only used when vertices match~\cite{practice/road/Abraham2011}.
We also align arrays to cache lines.

\subsection{Theoretical Properties}
\label{sec:algorithm_theoretical}

\subsubsection{Minimality}

\vspace{-0.6em} \begin{theorem}
Let $L'_n$ be the index defined in Section~\ref{sec:algo_pruning}.
$L'_n$ is minimal in the sense that, for any vertex $v$ and for any pair $(u, \delta_{uv}) \in L'_n(v)$,
there is a pair of vertices $(s, t)$
such that, if we remove $(u, \delta_{uv})$ from $L'_n(v)$,
we cannot answer the correct distance between $s$ and $t$.
\end{theorem}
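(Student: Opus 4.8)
The plan is to first prove a \emph{canonical characterization} of the final labels and then, for each label entry, exhibit a query that relies on it uniquely. Write $v_i$ for the $i$-th root in the processing order. I claim that for every vertex $v$, a pair $(v_i, d_G(v_i, v))$ belongs to $L'_n(v)$ if and only if $v_i$ is the vertex of smallest index among all vertices of $P_G(v_i, v)$; since $v_i \in P_G(v_i, v)$, this says $v_i$ strictly precedes every other vertex lying on a shortest path between $v_i$ and $v$. Only the ``only if'' direction is needed below, and it follows from the pruning rule by the same reasoning as in the proof of Theorem~\ref{theorem:correctness}.

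To establish the ``only if'' direction I argue the contrapositive: if some $v_m \in P_G(v_i, v)$ has $m < i$, then $v$ is pruned in the $i$-th pruned BFS, so $(v_i, \cdot) \notin L'_n(v)$. I take $v_m$ to be the smallest-index vertex of $P_G(v_i, v)$. A short triangle-inequality computation then shows $P_G(v_m, v_i) \subseteq P_G(v_i, v)$ and $P_G(v_m, v) \subseteq P_G(v_i, v)$, so $v_m$ is also the smallest-index vertex of each of these two sets. The internal lemma of Theorem~\ref{theorem:correctness} (that the $m$-th BFS labels every vertex of $P_G(v_m, \cdot)$ without pruning, whenever $v_m$ is earliest on it) then yields $(v_m, d_G(v_m, v_i)) \in L'_m(v_i) \subseteq L'_{i-1}(v_i)$ and $(v_m, d_G(v_m, v)) \in L'_m(v) \subseteq L'_{i-1}(v)$. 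Consequently $\query(v_i, v, L'_{i-1}) \le d_G(v_m, v_i) + d_G(v_m, v) = d_G(v_i, v)$, where the last equality uses $v_m \in P_G(v_i, v)$, and this triggers the pruning of $v$, proving the claim.

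Given this characterization, minimality is short. Fix a vertex $v$ and an entry $(u, \delta_{uv}) \in L'_n(v)$, and set $u = v_i$. I take the witnessing pair to be $(s,t) = (u, v)$. Before deletion the query is answered correctly through $u$, since $(v_i, 0) \in L'_n(v_i)$ and $(v_i, \delta_{uv}) \in L'_n(v)$ contribute $0 + \delta_{uv} = d_G(u, v)$. I then show $u$ is the unique hub on $P_G(u, v)$ shared by the two labels: suppose some $v_l \neq v_i$ with $v_l \in P_G(u, v)$ were shared, so that $(v_l, \cdot) \in L'_n(v_i)$ and $(v_l, \cdot) \in L'_n(v)$. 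Applying the characterization to $(u, \delta_{uv}) \in L'_n(v)$, the index $i$ is smallest in $P_G(v_i, v)$, whence $i < l$. But $(v_l, \cdot) \in L'_n(v_i)$ would, by the same characterization, force $v_l$ to be the smallest-index vertex of $P_G(v_l, v_i)$, which is impossible because $v_i$ lies in $P_G(v_l, v_i)$ and has the strictly smaller index $i < l$. Hence no such $v_l$ exists.

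Therefore, after removing $(u, \delta_{uv})$ from $L'_n(v)$, the hub $u$ is no longer shared, and every remaining shared hub $w$ satisfies $w \notin P_G(u, v)$, so $d_G(u, w) + d_G(w, v) > d_G(u, v)$ by the triangle inequality. Thus $\query(u, v, \cdot)$ evaluated on the reduced index is strictly larger than $d_G(u, v)$ (and equals $\infty$ if no hub remains shared); in either case the reported distance is wrong. The main obstacle is the first step: stating the canonical characterization precisely and correctly reusing the internal lemma of Theorem~\ref{theorem:correctness}, including the set containments $P_G(v_m, v_i) \subseteq P_G(v_i, v)$ and $P_G(v_m, v) \subseteq P_G(v_i, v)$; once that is in place, the remainder is a one-line index comparison.
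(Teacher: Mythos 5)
Your proof is correct and follows essentially the same route as the paper's: the witnessing pair is the hub--vertex pair $(u,v)$ itself, and the heart of the argument is that any other shared hub lying on $P_G(u,v)$ would have triggered pruning --- either of $v$ in the BFS from $u$ (when the competing hub is earlier-ranked) or of the competing hub itself (when it is later-ranked). The only difference is organizational: you package the paper's two-case analysis ($k<j$ versus $k>j$) as an explicit ``only if'' characterization of label membership (a hub $v_i$ labels $v$ only if $v_i$ is the earliest-ranked vertex on $P_G(v_i,v)$), which is a clean and correct way of reusing the internal lemma from the proof of Theorem~\ref{theorem:correctness}.
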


\begin{proof}
Let $v_i \in V$ and $(v_j, \delta_{v_j v_i}) \in L'_n(v_i)$.
This implies $j < i$.
We show that if we remove $(v_j, \delta_{v_j v_i})$ from $L'_n(v_i)$
then we cannot answer the correct distance between $v_i$ and $v_j$.
We claim that,
for any $k \neq j$,
either (i) $(v_k, \delta_{v_k v_i}) \not \in L'_n(v_i)$ or $(v_k, \delta_{v_k v_j}) \not\in L'_n(v_j)$ holds,
or (ii) $d_G(v_i, v_k) + d_G(v_k, v_j) > d_G(v_i, v_j)$ holds.
Suppose $k < j$ and assume that (ii) does not hold.
Then, (i) must hold since otherwise the $j$-th BFS should have pruned vertex $v_i$
  and $(v_j, \delta_{v_j v_i}) \not\in L'_n(v_i)$.
Suppose $k> j$ and assume that (ii) does not hold.
Then, $v_k \in P_G(v_i, v_j)$ and therefore $(v_j, \delta_{v_j v_k}) \in L'_j(v_k)$,
thus the $k$-th BFS prunes vertex $v_j$,
leading to $(v_k, \delta_{v_k v_j}) \not\in L'_n(v_j)$.
\end{proof}

\revision{
\subsubsection{Exploiting Existence of Highly Central Vertices}
\label{sec:algorithm_theoretical_landmark}
Then, we compare our method with landmark-based methods
to show that our method also can exploit the existence of highly central vertices.
We consider the standard landmark-based method~\cite{practice/landmarks/Potamias2009, practice/landmarks/Vieira2007},
which do not use any path heuristics.
As we stated in Section~\ref{sec:rel_approx},
by selecting central vertices as landmarks,
it attains remarkable average precision for real-world networks.
From the following theorem,
we can observe that
our method is efficient for networks whose distance can be answered by
landmark-based methods with such high precision,
and our method also can exploit the existence of these central vertices.

\begin{theorem}
\label{theorem:central}
If we assume that the standard landmark-based approximate method
can answer correct distances to $(1 - \epsilon)n^2$ pairs (out of $n^2$ pairs) using $k$ landmarks,
then the pruned landmark labeling method
gives an index with average label size $O(k + \epsilon n)$.
\end{theorem}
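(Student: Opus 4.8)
The plan is to choose the vertex ordering so that the $k$ landmarks are processed first, say $v_1, \ldots, v_k$ are exactly the landmarks and $v_{k+1}, \ldots, v_n$ follow in any order, and then to bound the total index size $\sum_{v} \abs{L'_n(v)}$ by charging each stored label entry either to one of the first $k$ pruned BFSs or to a pair on which the landmark method fails. Dividing by $n$ at the end yields the claimed average label size.

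First I would pin down the exact rule governing when a pair is stored. Combining the pruning test on Line~\ref{line:query} with Theorem~\ref{theorem:correctness}, I claim that a pair $(v_j, d_G(v_j, u))$ lies in $L'_n(u)$ precisely when $u$ is reachable from $v_j$ and no earlier root is on a shortest path, i.e.\ $P_G(v_j, u) \cap \set{v_1, \ldots, v_{j-1}} = \emptyset$. The forward direction is immediate: if some $v_i$ with $i < j$ were on a shortest $v_j$--$u$ path, then since $L'_{j-1}$ already answers correctly for pairs covered by $\set{v_1,\ldots,v_{j-1}}$ (Theorem~\ref{theorem:correctness}), we would have $\query(v_j, u, L'_{j-1}) = d_G(v_j, u)$ and $u$ would be pruned. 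For the reverse direction I would reuse the observation from the correctness proof that $P_G(v_j, w) \subseteq P_G(v_j, u)$ for every $w$ on a shortest $v_j$--$u$ path, so none of these intermediate vertices is pruned, the whole path is traversed, and $u$ is reached and labeled.

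Next I would split the total size as $\sum_v \abs{L'_n(v)} = A + B$, where $A$ counts entries contributed by the first $k$ BFSs and $B$ counts those contributed by the BFSs from $v_{k+1}, \ldots, v_n$. Since each BFS labels at most $n$ vertices, $A \leq kn$. For $B$, every stored entry $(v_j, \cdot)$ with $j > k$ at a vertex $u$ satisfies $P_G(v_j, u) \cap \set{v_1, \ldots, v_{j-1}} = \emptyset$, and in particular $P_G(v_j, u)$ contains no landmark. But this is exactly the condition under which the landmark method fails: with no landmark $\ell$ on a shortest $v_j$--$u$ path it returns $\min_\ell\set{d_G(v_j,\ell) + d_G(\ell,u)} > d_G(v_j,u)$. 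Thus the map sending each entry counted in $B$ to the ordered pair $(v_j, u)$ injects into the set of pairs on which the landmark method is incorrect, a set of size at most $\epsilon n^2$ by hypothesis, so $B \leq \epsilon n^2$.

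Putting these together gives $\sum_v \abs{L'_n(v)} \leq kn + \epsilon n^2$, and dividing by $n$ yields average label size at most $k + \epsilon n = O(k + \epsilon n)$. I expect the main obstacle to be the first step: establishing the clean ``stored iff no earlier root lies on a shortest path'' characterization while correctly handling the interaction between reachability and pruning, since this is the bridge that lets me identify late label entries with landmark-failure pairs. Once that characterization is in place, the remaining counting is routine.
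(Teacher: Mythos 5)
Your proposal is correct and follows essentially the same route as the paper's (very brief) proof sketch: order the $k$ landmarks first, charge at most $kn$ entries to their pruned BFSs, and observe that every later entry corresponds to a pair with no landmark on any shortest path, of which there are at most $\epsilon n^2$ by hypothesis. Your explicit ``stored iff no earlier root lies on a shortest path'' characterization is just a careful unpacking of the pruning rule that the paper leaves implicit.
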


\begin{proof}[Sketch]
After conducting pruned BFSs from the $k$ landmark vertices first,
at most $\epsilon n^2$ pairs are added to the index in total,
since we never add pairs whose distance can be answered from current labels.
\end{proof}
}

\subsubsection{Exploiting Small Tree-width of Fringes}
\label{sec:algorithm:treewidth}
\revision{Finally, we show a theoretical evidence
that our method can also exploit tree-like fringes efficiently.}
As we mentioned in Section~\ref{sec:related_work:exact},
methods based on tree decompositions were proposed
for distance queries~\cite{practice/exact/Wei2010, practice/exact/Akiba2012}.
Both of them extend methods which work efficiently for graphs of small tree-width,
and they exploit low tree-width of fringes in real-world networks by tree decompositions.
Interestingly, though we do not use tree decompositions explicitly,
we can prove that our method can efficiently process graphs of small tree-width.
\revision{Thus, our method implicitly exploits this property of real-world networks too.}
For definitions of tree-width and tree decompositions, see~\cite{theory/td/Robertson1984}.

\begin{theorem}
\label{theorem:tree-width}
Let $w$ be the tree-width of $G$.
There is an order of vertices with which the pruned landmark labeling method
takes $O(wm \log n + w^2 n \log^2 n)$ time for preprocessing,
stores an index with $O(w n \log n)$ space,
and answers each query in $O(w \log n)$ time.
\end{theorem}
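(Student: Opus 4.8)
The plan is to exhibit a concrete vertex ordering coming from a \emph{balanced separator hierarchy} of a tree decomposition, and then to show that the pruning confines every label to the separators that lie ``above'' a vertex in this hierarchy. First I would fix a tree decomposition $T$ of $G$ of width $w$ (so every bag has at most $w+1$ vertices) and recursively build the hierarchy: in the current decomposition subtree I pick a \emph{centroid bag} $B$ whose removal splits the remaining vertices into subtrees each carrying at most half of them, declare the vertices of $B$ to be the separator of the current region, assign them the shallowest available positions in the order, and recurse on each subtree. By the defining property of tree decompositions, deleting $B$ from $G$ disconnects vertices belonging to different subtrees, so $B$ is a genuine separator of the region; balancedness makes the recursion depth $O(\log n)$. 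This yields a laminar family of regions, assigns each vertex $v$ to the unique region $\mathrm{reg}(v)$ whose separator removes it, and defines a global order $v_1,\dots,v_n$ in which shallower separators precede deeper ones. For each $v$ let $A(v)$ be the union of the separators along the root-to-$\mathrm{reg}(v)$ path; since there are $O(\log n)$ levels and each separator has $\le w+1$ vertices, $\abs{A(v)} = O(w\log n)$, and crucially $v\in A(v)$.

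The heart of the argument is the \textbf{label-size bound}: I would show that $(v_j,\cdot)\in L'_n(v_i)$ forces $v_j\in A(v_i)$, so $\abs{L'_n(v_i)}\le\abs{A(v_i)}=O(w\log n)$. Assume $j<i$ without loss of generality. Using Theorem~\ref{theorem:correctness} to identify $\query(\cdot,\cdot,L'_{j-1})$ with the naive index, the pruning rule at Line~\ref{line:query} gives the following: had any vertex ordered before $v_j$ lain on a shortest $v_i$-$v_j$ path, then $v_i$ would have been pruned during the $j$-th BFS and $(v_j,\cdot)$ would never enter $L'_n(v_i)$. Hence $v_j$ is the lowest-ordered vertex of $P_G(v_i,v_j)$. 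Now I invoke the separator structure: let $R$ be the smallest region containing both $v_i$ and $v_j$ with separator $S_R$. If $v_j\notin S_R$, then $v_i$ and $v_j$ sit in distinct children of $R$ (or $v_j$ lies outside $R$), so every $v_i$-$v_j$ path either crosses $S_R$ or leaves $R$ through a strictly shallower ancestor separator; either way some vertex ordered \emph{before} $v_j$ lies in $P_G(v_i,v_j)$, contradicting minimality. Therefore $v_j\in S_R\subseteq A(v_i)$, as required.

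Given the $O(w\log n)$ per-vertex label bound, the remaining claims follow quickly. The index stores $\sum_v \abs{L'_n(v)} = O(wn\log n)$ pairs, and a query scans two labels in $O(w\log n)$ time. For preprocessing I would amortize over the whole run: a vertex $u$ is expanded (its edges traversed) only in those BFSs where it is labeled, i.e.\ in exactly $\abs{L'_n(u)}=O(w\log n)$ of them, so the total edge work is $\sum_u \abs{L'_n(u)}\deg(u) = O(w\log n)\cdot O(m) = O(wm\log n)$; likewise each labeled visit issues one query costing $O(w\log n)$ with the implementation of Section~\ref{sec:implementation_preprocessing}, contributing $\sum_u \abs{L'_n(u)}^2 = O(w^2 n\log^2 n)$.

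The step I expect to be the main obstacle is the label-size bound, specifically the case analysis tying the global pruning rule to the separator hierarchy. The delicate point is that shortest paths in $G$ need not stay inside a region $R$; the fix is the observation that escaping $R$ costs a pass through an even shallower, hence lower-ordered, ancestor separator, so minimality of the rank of $v_j$ is violated in that case too. A secondary technical nuisance is building the balanced hierarchy cleanly (choosing a weighted centroid of the decomposition tree and turning the subtrees into a genuine partition of $V$ by assigning each vertex to its shallowest bag) and charging the queries evaluated at the pruned frontier; I would fold the latter into the $O(wm\log n)$ edge term, using that tree-width-$w$ graphs satisfy $m = O(wn)$ to reconcile the two stated terms.
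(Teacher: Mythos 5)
Your proposal follows essentially the same route as the paper's (sketch of a) proof: a centroid decomposition of a width-$w$ tree decomposition, ordering vertices by separator depth, and the observation that pruning confines each vertex's label to the $O(\log n)$ ancestor separators above it, yielding $O(w\log n)$ labels per vertex and hence the stated space, query, and preprocessing bounds. Your label-size argument via the minimality of the rank of $v_j$ in $P_G(v_i,v_j)$ is in fact more detailed than the paper's one-line justification that ``later pruned BFSs never go beyond the bag,'' and the one point you flag as delicate---charging the pruning tests issued at pruned frontier vertices---is left equally informal in the paper's own sketch.
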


\begin{proof}[Sketch]
The key ingredient is the centroid decomposition~\cite{theory/centroid_decomposition} of the tree decomposition.
First we conduct pruned BFSs from all the vertices in a centroid bag.
Then, later pruned BFSs never go beyond the bag.
Therefore, we can consider as we divided the tree decomposition
into disjoint components, each having at most half of the bags.
We recursively repeat this procedure.
The number of recurrences is at most $O(\log n)$.
Since we add at most $w$ pairs to each vertex at each depth of recursion,
the number of pairs in each label is $O(w \log n)$.
At each depth of recursion,
the total time consumed by pruned BFSs from the current components is $O(wm + w^2 n \log n)$,
where $O(wm)$ is the time for traversing edges and $O(w^2 n \log n)$ is the time for pruning tests.
\end{proof}

\section{Bit-parallel Labeling}
\label{sec:bit-parallel}
To further speed up both preprocessing and querying,
we propose an optimizing method which exploits bit-level parallelism.
Bit-parallel methods are those that
perform different calculations on different bits in the same word
to exploit the fact that
computers can perform bitwise operations on a word at once. %
The word length is commonly 32 or 64 in computers of the day.

In the following, we denote the number of bits in a computer word as $b$
and assume bitwise operations on bit vectors of length $b$ can be done in $O(1)$ time.
We propose an algorithm to conduct BFSs and compute labels from $b+1$ roots
simultaneously in $O(m)$ time. %
Moreover, we also propose a method to answer
distance queries for any pair of vertices via one of these $b+1$ vertices in $O(1)$ time.

\subsection{Bit-parallel Labels}
To describe the preprocessing algorithm and the querying algorithm,
we first define what we store in the index.

As we explain in the next subsection,
we conduct bit-parallel BFSs from a vertex $r$ and a subset of its neighbors $S_r \subseteq N_G(r)$
with size at most $b$.
We define
\begin{align*}
S^{i}_r(v) = \set{u \in S_r \mid d_G(u, v) - d_G(r, v) = i}.
\end{align*}
Since vertices in $S_r$ are neighbors of $r$,
for any vertex $u \in S_r$ and any vertex $v \in V$,
$\abs{d_G(u, v) - d_G(r, v)} \leq 1$.
Therefore, for each $v \in V$,
$S_r$ can be partitioned into $S^{-1}_r(v), S^{0}_r(v)$, and $S^{+1}_r(v)$.
That is, $S^{-1}_r(v) \cup S^{0}_r(v) \cup S^{+1}_r(v) = S_r$.

\newcommand{\Lbp}{L_\textsc{BP}}

We compute \textit{bit-parallel labels} and store them in the index.
For each vertex $v \in V$, we precompute a bit-parallel label denoted as $\Lbp(v)$.
$\Lbp(v)$ is a set of quadruples $(u, \delta_{uv}, S^{-1}_u(v), S^{0}_u(v))$,
where $u \in V$ is a vertex, $\delta_{uv} = d_G(u, v)$ and $S^{i}_u(v) \subseteq S_u$ is defined above.
We store $S^{-1}_u(v)$ and $S^{0}_u(v)$ by bit vectors of $b$ bits.
Note that $S^{+1}_u(v)$ can be obtained as $S_u \setminus(S^{-1}_u(v) \cup S^{0}_u(v))$,
but actually we do not use $S^{+1}_u(v)$ in the querying algorithm.

In order to describe subsets of $S_r$ by bit vectors of $b$ bits,
we assign an unique number between one and $\abs{S_r}$ to each vertex in $S_r$,
and express presence of the $i$-th vertex by setting the $i$-th bit.

\subsection{Bit-parallel BFS}
\begin{algorithm}[t]
\caption{Bit-parallel BFS from $r \in V$ and $S_r \subseteq N_G(r)$.}
\label{alg:bp}
\begin{algorithmic}[1]
\Procedure{Bp-BFS}{$G$, $r$, $S_r$}
   \State $(P[v], S^{-1}_r[v], S^{0}_r[v]) \gets (\infty, \emptyset, \emptyset)$
    for all $v \in V$
   \State $(P[r], S^{-1}_r[r], S^{0}_r[r]) \gets (0, \emptyset, \emptyset)$
   \State $(P[v], S^{-1}_r[v], S^{0}_r[v]) \gets (1, \set{v}, \emptyset)$ for all $v \in S_r$
  \State $Q_0, Q_1 \gets $ an empty queue
  \State Enqueue $r$ onto $Q_0$
  \State Enqueue $v$ onto $Q_1$ for all $v \in S_r$
  \While{$Q_0$ is not empty}
    \State $E_0 \gets \emptyset$ and $E_1 \gets \emptyset$
    \While{$Q_0$ is not empty}
      \State Dequeue $v$ from $Q_0$.
      \ForAll{$u \in N_G(v)$}
        \If{$P[u] = \infty \lor P[u] = P[v] + 1$}
          \State $E_1 \gets E_1 \cup \set{(v, u)}$
          \If{$P[u] = \infty$}
            \State $P[u] \gets P[v] + 1$
            \State Enqueue $u$ onto $Q_1$.
          \EndIf
        \ElsIf{$P[u] = P[v]$}
          \State $E_0 \gets E_0 \cup \set{(v, u)}$
        \EndIf
      \EndFor
    \EndWhile
    \ForAll{$(v, u) \in E_0$}
      \State $S^{0}_r[u] \gets S^{0}_r[u] \cup S^{-1}_r[v]$
    \EndFor
    \ForAll{$(v, u) \in E_1$}
      \State $S^{-1}_r[u] \gets S^{-1}_r[u] \cup S^{-1}_r[v]$
      \State $S^{0}_r[u] \gets S^{0}_r[u] \cup S^{0}_r[v]$
    \EndFor
    \State $Q_0 \gets Q_1$ and $Q_1 \gets \emptyset$
  \EndWhile
  \State \Return $(P, S^{-1}_r, S^{0}_r)$
\EndProcedure
\end{algorithmic}
\end{algorithm}

We once put aside the pruning discussed in Section~\ref{sec:algo_pruning}
and we make a bit-parallel version of the naive labeling method
discussed in Section~\ref{sec:algo_naive}.
We introduce pruning later in Section~\ref{sec:bp_pruning}.

Let $r \in V$ be a vertex and $S_r \subseteq N_G(r)$ be a subset of neighbors of $r$ with size at most $b$.
We explain an algorithm to compute $d_G(r, v)$, $S^{-1}_r(v)$ and $S^{0}_r(v)$ for all $v \in V$ that are reachable from $\set{r} \cup S_r$.
The algorithm is described as Algorithm~\ref{alg:bp}.
Basically we conduct a BFS from $r$
computing sets $S^{-1}$ and $S^{0}$.

Let $v$ be a vertex.
Suppose that we have already computed
$S^{-1}_r(w)$ for all $w$ such that $d_G(r, w) < d_G(r, v)$.
We can compute $S^{-1}_r(v)$ as follows,
\begin{align*}
\set{u \in S_r \mid u \in S^{-1}_r(w), w \in N_G(v), d_G(r, w) = d_G(r, v) - 1 },
\end{align*}
since if $u$ is in $S^{-1}_r(v)$,
$d_G(u, v) = d_G(r, v) - 1$ and therefore $u$ is on one of the shortest paths
from $r$ to $v$.
Similarly, assuming that we have already computed
$S^{-1}_r(w)$ for all $w$ such that $d_G(r, w) \leq d_G(r, v)$
and $S^{0}_r(w)$ for all $w$ such that $d_G(r, w) < d_G(r, v)$,
we can compute $S^{0}_r(v)$ as follows,
\begin{align*}
\set{u \in S_r \mid u \in S_r^{0}(w), w \in N_G(v), d_G(r, w) = d_G(r, v) - 1 } \\
\cup \set{u \in S_r \mid u \in S_r^{-1}(w), w \in N_G(v), d_G(r, w) = d_G(r, v)}.
\end{align*}

Therefore, along with the breadth-first search,
we can compute $S^{-1}_r$ and $S^{0}_r$ alternately by dynamic programming
in the increasing order of distance from $r$.
That is,
first we compute $S^{-1}_r(v)$ for all $v \in V$ such that $d_G(r, v) = 1$,
next we compute $S^{0}_r(v)$ for all $v \in V$ such that $d_G(r, v) = 1$,
then we compute $S^{-1}_r(v)$ for all $v \in V$ such that $d_G(r, v) = 2$,
next we compute $S^{0}_r(v)$ for all $v \in V$ such that $d_G(r, v) = 2$,
and so on.
Note that operations on sets can be done in $O(1)$ time by
representing sets by bit vectors and using bitwise operations.

\subsection{Bit-parallel Distance Querying}
\label{sec:bit-parallel_query}
\revision{
To process a distance query between a pair of vertices $s$ and $t$,
as with normal labels, we scan bit-parallel labels $\Lbp(s)$ and $\Lbp(t)$.
For each pair of quadruples that share the same root vertex,
$(r, \delta_{rs}, S^{-1}_r(s), S^{0}_r(s)) \in \Lbp(s)$ and
$(r, \delta_{rt}, S^{-1}_r(t), S^{0}_r(t)) \in \Lbp(t)$,
from these quadruples we compute distance between $s$ and $t$
via one of vertices in $\set{r} \cup S_r$.
That is, we compute $\delta = \min\limits_{u \in \set{r}\cup S_r} \set{d_G(s, u) + d_G(u, t)}$.
}%
A naive way is to compute $d_G(s, u)$ and $d_G(u, t)$
for all $u$ and take the minimum,
which takes $O(\abs{S_r})$ time.
However, we propose an algorithm to compute $\delta$
in $O(1)$ time by exploiting bitwise operations.

Let $\tilde{\delta} = d_G(s, r) + d_G(r, t)$.
Since $\tilde{\delta}$ is an upper bound on $\delta$
and $d_G(s, u) \geq d_G(s, r) - 1, d_G(u, t) \geq d_G(r, t) - 1$ for all $u \in S_r$,
$\tilde{\delta} - 2 \leq \delta \leq \tilde{\delta}$.
Therefore, what we have to do is to judge
whether the distance $\delta$ is $\tilde{\delta}-2$, $\tilde{\delta}-1$ or $\tilde{\delta}$.

This can be done as follows.
If $S^{-1}_r(s) \cap S^{-1}_r(t) \not= \emptyset$,
then $\delta = \tilde{\delta}-2$.
Otherwise, if $S^{0}_r(s) \cap S^{-1}_r(t) \not= \emptyset$
or $S^{-1}_r(s) \cap S^{0}_r(t) \not= \emptyset$,
then $\delta = \tilde{\delta}-1$, and otherwise $\delta = \tilde{\delta}$.
Note that computing intersections of sets can be done by bitwise AND operations.
Therefore, all these operations can be done in $O(1)$ time.
\revision{
Thus, the distance $\delta$ can be computed in $O(1)$ time,
and, in total, we can answer each query in $O(\abs{\Lbp(s)} + \abs{\Lbp(t)})$ time.
}

\subsection{Introducing to Pruned Labeling}
\label{sec:bp_pruning}
Now we discuss how to combine this bit-parallel labeling methods
and the pruned labeling method discussed in Section~\ref{sec:algo_pruning}.
We propose a simple and efficient way as follows.
First we conduct bit-parallel BFSs without pruning for $t$ times, where $t$ is a parameter.
Then, we conduct pruned BFSs using both the bit-parallel labels and normal labels for pruning.

This method exploits different strength of the pruned labeling method and the bit-parallel labeling method.
In the beginning, pruning does not work much and pruned BFSs visits large portion of the vertices.
Therefore, instead of pruned labeling, we use bit-parallel labeling without pruning to efficiently cover a larger part of pairs of vertices.
Skipping the overhead of vain pruning tests also contributes the speed-up.

As roots and neighbor sets for bit-parallel BFSs,
we propose to greedily use vertices with the highest priority:
we select a vertex with the highest priority as the root $r$ among remaining vertices,
and we select up to $b$ vertices with the highest priority as the set $S_r$ among remaining neighbors.%

As we see in the experimental results in Section~\ref{sec:experiments},
this method improves the preprocessing time, the index size and the query time.
Moreover, as we also confirm in the experiments,
if we do not set too large value as $t$,
at least it does not spoil the performance.
Therefore we do not have to be too serious about finding a proper value for $t$,
and our method is still easy to use.

\section{Variants and Extensions}
\label{sec:variants}
\noindent \textbf{Shortest-Path Queries:}
To answer not only distances but also shortest-paths,
we store sets of tuples instead of pairs as labels.
Label $L(v)$ is a set of triples $(u, \delta_{uv}, p_{uv})$,
where $p_{uv} \in V$ is the parent of $u$ in the pruned breadth-first search tree rooted at $u$
created by the pruned BFS from $u$.
We can restore the shortest path between $v$ and $u$ by ascending the tree from $v$ to the parents.

\myparagraph{Weighted Graphs}
To treat weighted graphs,
the only necessary change is to perform pruned Dijkstra's algorithm instead of pruned BFSs.
Bit-parallel labeling cannot be used for weighted graphs.

\newcommand{\Lin}{L_\textsc{IN}}
\newcommand{\Lout}{L_\textsc{OUT}}

\myparagraph{Directed Graphs}
To treat directed graphs,
we first redefine $d_G(u,v)$ as the distance from $u$ to $v$.
Then, we store two labels $\Lout(v)$ and $\Lin(v)$ for each vertex.
Label $\Lout(v)$ is a set of pairs $(u, \delta_{vu})$,
where $u \in V$ and $\delta_{vu} = d_G(v, u)$,
and Label $\Lin(v)$ is a set of pairs $(u, \delta_{uv})$,
where $u \in V$ and $\delta_{uv} = d_G(u, v)$.
We can answer the distance from vertex $s$ to vertex $t$ by $\Lout(s)$ and $\Lin(t)$.
To compute these labels,
from each vertex, we conduct pruned BFSs twice:
once in the forward direction and once in the reverse direction.

\revision{
\myparagraph{Disk-based Query Answering}
To answer a distance query,
our querying algorithm only refers to two contiguous regions.
Thus, if the index is disk resident,
we can answer queries with two disk seek operations,
which would be still much faster than an in-memory BFS.
}

\section{Experiments}
\label{sec:experiments}

\begin{figure*}[t]
  \centering
  \subfloat[Degree distribution of smaller five datasets.]{
    \hspace{-1.5em}
    \epsfig{file=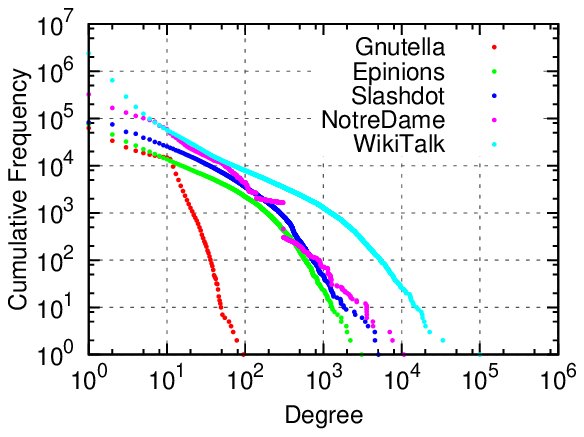, height=32mm}
    \label{fig:degree1}
    \hspace{-1.5em}
  }
  \hspace{0.5em}
  \subfloat[Degree distribution of larger six datasets.]{
    \hspace{-1.5em}
    \epsfig{file=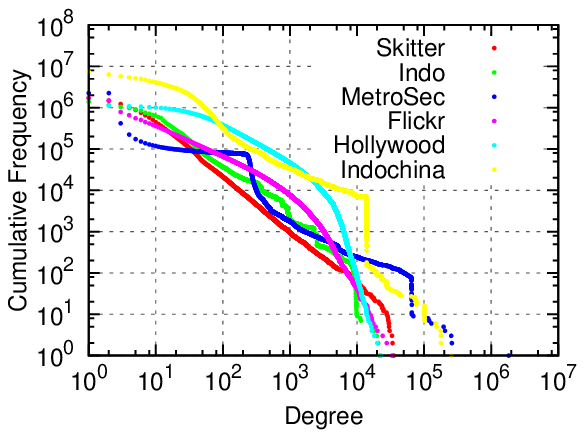, height=32mm}
    \label{fig:degree2}
    \hspace{-1.5em}
  }
  \hspace{0.5em}
  \subfloat[Distance distribution of smaller five datasets.]{
    \hspace{-1.5em}
    \epsfig{file=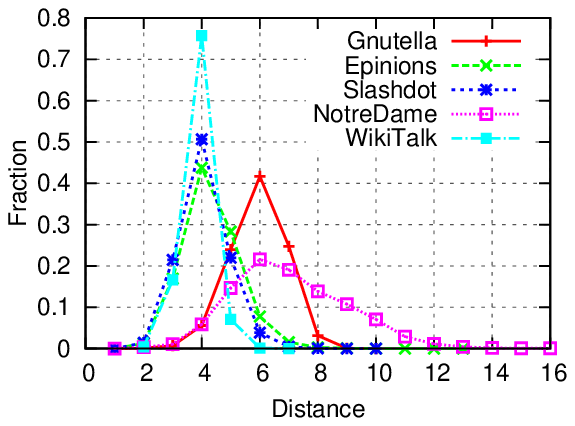, height=32mm}
    \label{fig:distance1}
    \hspace{-1.5em}
  }
  \hspace{0.5em}
  \subfloat[Distance distribution of larger six datasets.]{
    \hspace{-1.5em}
    \epsfig{file=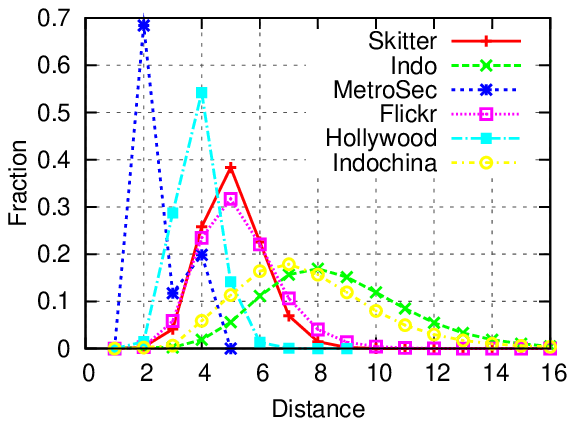, height=32mm}
    \label{fig:distance2}
    \hspace{-1.5em}
  }
  \caption{Properties of the datasets.}
\end{figure*}

{
\tabcolsep = 1.7mm
\begin{table*}[htbp]
\small\center
\caption{Performance comparison between the proposed method and previous methods for the real-world datasets.
\textsc{IT} denotes indexing time, \textsc{IS} denotes index size,
\textsc{QT} denotes query time, and \textsc{LN} denotes average label size for each vertex.
\textsc{DNF} means it did not finish in one day or ran out of memory.}
\begin{tabular}{|l||rrrr|rrrr|rrr|r|}
\hline
 \multirow{2}*{\hfill Dataset \hfill}
 & \multicolumn{4}{c|}{Pruned Landmark Labeling}
 & \multicolumn{4}{c|}{Hierarchical Hub Labeling~\cite{practice/exact/Abraham2012}}
 & \multicolumn{3}{c|}{Tree Decomposition~\cite{practice/exact/Akiba2012}}
 & \multirow{2}*{\hfill BFS \hfill} \\
 & \textsc{IT} & \textsc{IS} & \textsc{QT} & \textsc{LN} &
\textsc{IT} & \textsc{IS} & \textsc{QT} & \textsc{LN} &
\textsc{IT}  & \textsc{IS} & \textsc{QT} & \\
\hline\hline
Gnutella & 54 s & 209 MB & 5.2 $\mu$s & 437+16 & 245 s & 380 MB & 11 $\mu$s & 1,275 & 209 s & 68 MB & 19 $\mu$s & 3.2 ms \\
Epinions & 1.7 s & 32 MB & 0.5 $\mu$s & 7+16 & 495 s & 93 MB & 2.2 $\mu$s & 256 & 128 s & 42 MB & 11 $\mu$s & 7.4 ms \\
Slashdot & 6.0 s & 48 MB & 0.8 $\mu$s & 14+16 & 670 s & 182 MB & 3.9 $\mu$s & 464 & 343 s & 83 MB & 12 $\mu$s & 12 ms \\
NotreDame & 4.5 s & 138 MB & 0.5 $\mu$s & 29+16 & 10,256 s & 64 MB & 0.4 $\mu$s & 41 & 243 s & 120 MB & 39 $\mu$s & 17 ms \\
WikiTalk & 61 s & 1.0 GB & 0.6 $\mu$s & 9+16 & \textsc{DNF} & - & - & - & 2,459 s & 416 MB & 1.8 $\mu$s & 197 ms \\
Skitter & 359 s & 2.7 GB & 2.3 $\mu$s & 123+64 & \textsc{DNF} & - & - & - & \textsc{DNF} & - & - & 190 ms \\
Indo & 173 s & 2.3 GB & 1.6 $\mu$s & 133+64 & \textsc{DNF} & - & - & - & \textsc{DNF} & - & - & 150 ms \\
MetroSec & 108 s & 2.5 GB & 0.7 $\mu$s & 19+64 & \textsc{DNF} & - & - & - & \textsc{DNF} & - & - & 150 ms \\
Flickr & 866 s & 4.0 GB & 2.6 $\mu$s & 247+64 & \textsc{DNF} & - & - & - & \textsc{DNF} & - & - & 361 ms \\
Hollywood & 15,164 s & 12 GB & 15.6 $\mu$s & 2,098+64 & \textsc{DNF} & - & - & - & \textsc{DNF} & - & - & 1.2 s \\
Indochina & 6,068 s & 22 GB & 4.1 $\mu$s & 415+64 & \textsc{DNF} & - & - & - & \textsc{DNF} & - & - & 1.5 s \\
\hline
\end{tabular}
\label{tbl:performance}
\vspace{-1em}
\end{table*}
}

\begin{figure*}[t]
  \centering
  \subfloat[Number of vertices labeled in each pruned BFS.]{
    \hspace{-1.5em}
    \epsfig{file=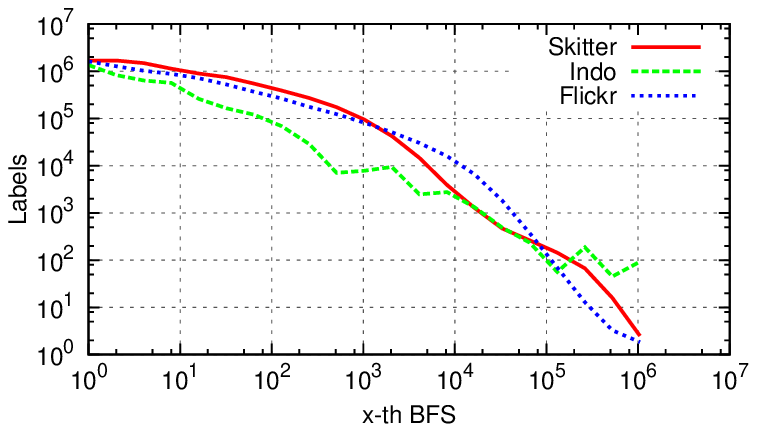, height=32mm}
    \label{fig:pruned_bfs}
    \hspace{-1.5em}
  }
  \hspace{1.5em}
  \subfloat[Cumulative distribution of the number of vertices labeled in each pruned BFS.]{
    \hspace{-1.5em}
    \epsfig{file=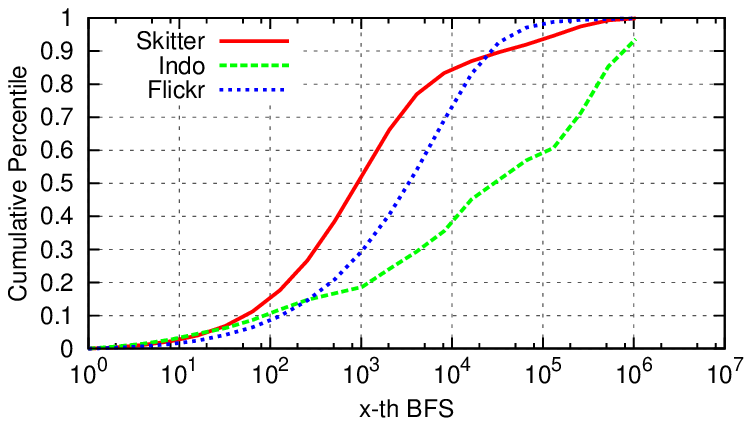, height=32mm}
    \label{fig:pruned_bfs_cumulative}
    \hspace{-1.5em}
  }
  \hspace{1.5em}
  \subfloat[Distribution of the sizes of labels.]{
    \hspace{-1.5em}
    \epsfig{file=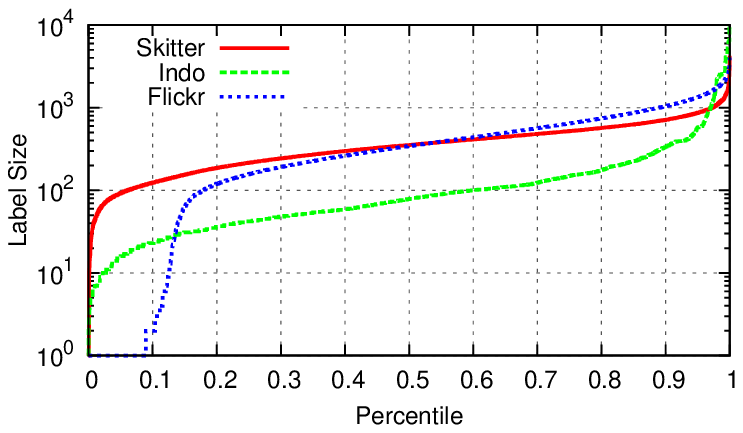, height=32mm}
    \label{fig:labels}
    \hspace{-1.5em}
  }
  \caption{Effect of pruning and sizes of labels.}
\end{figure*}

\begin{figure*}[t]
  \centering
  \subfloat[Average]{
    \hspace{-1.5em}
    \epsfig{file=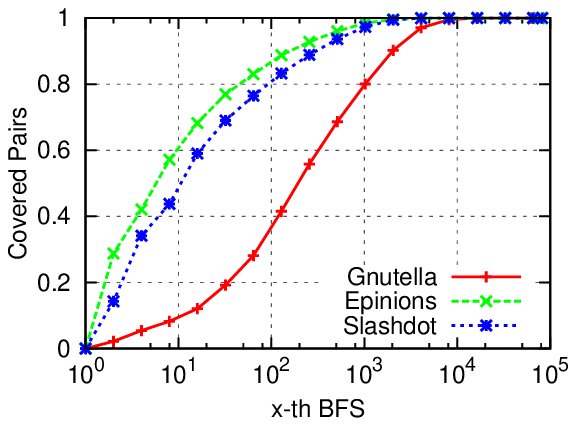, height=32mm}
    \label{fig:coverage_avg}
    \hspace{-1.5em}
  }
  \hspace{0.5em}
  \subfloat[Distance-wise, Gnutella]{
    \hspace{-1.5em}
    \epsfig{file=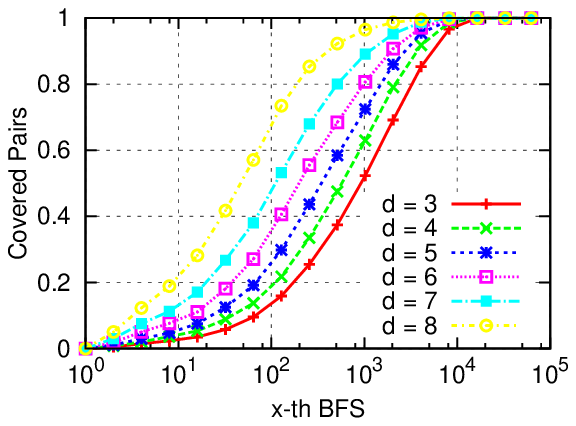, height=32mm}
    \label{fig:coverage_gnutella}
    \hspace{-1.5em}
  }
  \hspace{0.5em}
  \subfloat[Distance-wise, Epinions]{
    \hspace{-1.5em}
    \epsfig{file=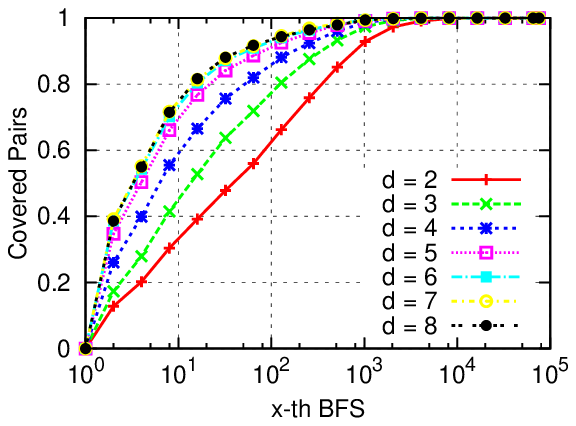, height=32mm}
    \label{fig:coverage_epinions}
    \hspace{-1.5em}
  }
  \hspace{0.5em}
  \subfloat[Distance-wise, Slashdot]{
    \hspace{-1.5em}
    \epsfig{file=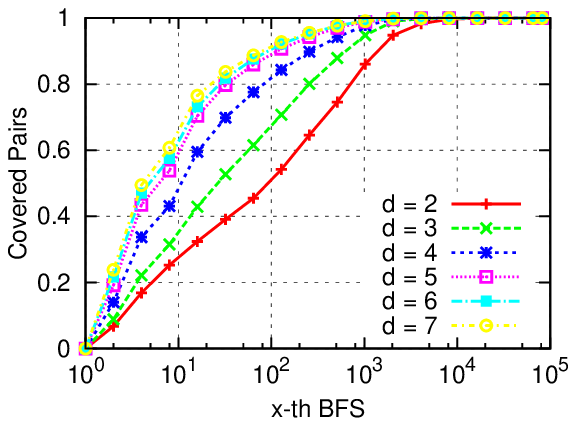, height=32mm}
    \label{fig:coverage_slashdot}
    \hspace{-1.5em}
  }
  \caption{Fraction of pairs of vertices whose distance can be answered by index, against
number of performed pruned BFS.}
\vspace{-1em}
\end{figure*}

\begin{figure*}[t]
  \centering
  \subfloat[Preprocessing time]{
    \hspace{-1.5em}
    \epsfig{file=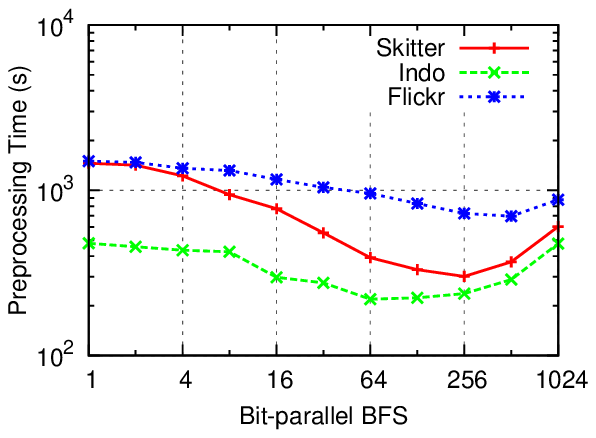, height=32mm}
    \label{fig:bp_time}
    \hspace{-1.5em}
  }
  \hspace{0.5em}
  \hspace{1.5em}
  \subfloat[Query time]{
    \hspace{-3em}
    \epsfig{file=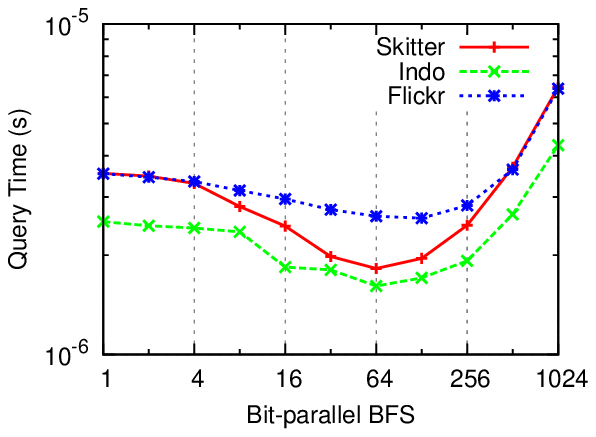, height=32mm}
    \label{fig:bp_query}
    \hspace{-3em}
  }
  \hspace{0.5em}
  \subfloat[Average size of a normal label]{
    \hspace{0em}
    \epsfig{file=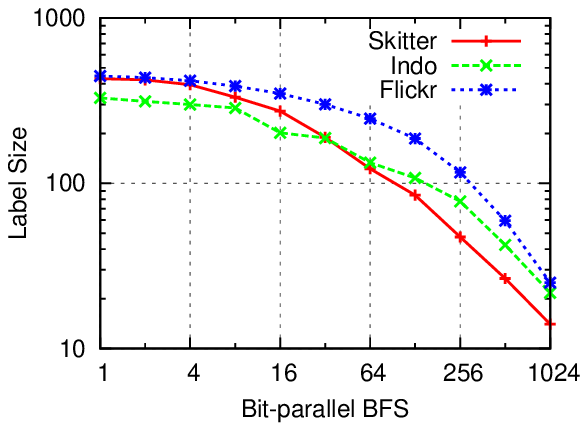, height=32mm}
    \label{fig:bp_labels}
    \hspace{1em}
  }
  \hspace{0.5em}
  \subfloat[Index size]{
    \hspace{-4em}
    \epsfig{file=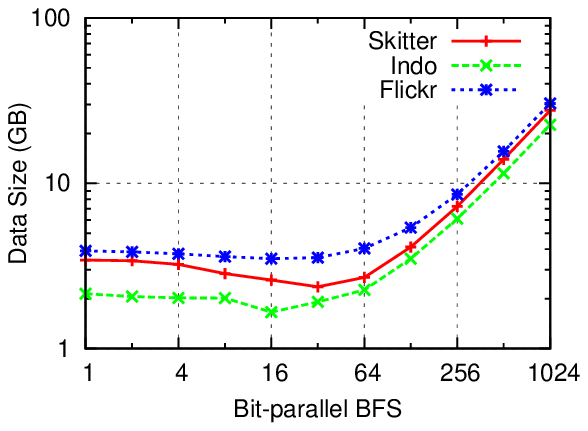, height=32mm}
    \label{fig:bp_size}
    \hspace{-4em}
  }
  \hspace{2.5em}
  \caption{Performance against number of bit-parallel BFSs.}
  \label{fig:bp}
\vspace{-1em}
\end{figure*}

We conducted experiments on a Linux server with
Intel Xeon X5670 (2.93 GHz) and 48GB of main memory.
The proposed method was implemented in C++.
We used 8-bit integers to represent distances,
32-bit integers to represent vertices,
and 64-bit integers to conduct bit-parallel BFSs.
For vertex ordering, we mainly use the \textsc{Degree} strategy
and we do not specify the vertex ordering strategy
unless we use other strategies.
For query time, we generally report the average time for 1,000,000 random queries.

\subsection{Datasets} %
\begin{table}[tbp]
\small\center
\caption{Datasets}
\begin{tabular}{|l|lrr|}
\hline
Dataset & Network & $\abs{V}$ & $\abs{E}$ \\
\hline\hline
Gnutella & Computer & 63 K & 148 K \\
Epinions & Social & 76 K & 509 K \\
Slashdot & Social & 82 K & 948 K \\
Notredame & Web & 326 K & 1.5 M \\
WikiTalk & Social & 2.4 M & 4.7 M \\
Skitter & Computer & 1.7 M & 11 M \\
Indo & Web & 1.4 M & 17 M \\
MetroSec & Computer & 2.3 M & 22 M \\
Flickr & Social & 1.8 M & 23 M \\
Hollywood & Social & 1.1 M & 114 M \\
Indochina & Web & 7.4 M & 194 M \\
\hline
\end{tabular}
\label{tbl:dataset}
\vspace{-1em}
\end{table}

To show the efficiency and robustness of our method,
we conducted experiments on various real-world networks:
five social networks, three web graphs and three computer networks.
We treated all the graphs as undirected, unweighted graphs.
Basically we used five smaller datasets to compare the performance between the proposed
method and previous methods and to analyze the behavior of these methods,
and used larger six datasets to show the scalability of the proposed method.
The types of networks, the numbers of vertices and edges
are presented in Table~\ref{tbl:dataset}.

\subsubsection{Detailed Description}

\iftrue%
\newcommand{\dataset}[1]{\vspace{0.3em}\noindent {\textbf{#1:}}}

\dataset{Gnutella}
This is a graph created from a snapshot of the Gnutella P2P network in August 2002~\cite{dataset/gnutella}.

\dataset{Epinions}
This graph is the on-line social network in Epinions (\texttt{www.epinions.com}),
where each vertex represents a user and
each edge represents a trust relationship~\cite{dataset/epinions}.

\dataset{Slashdot}
This is the on-line social network in Slashdot (\texttt{slashdot.org}) obtained in February 2009.
Vertices correspond to users and edges correspond to friend/foe links between the users~\cite{dataset/slashdot}.

\dataset{NotreDame}
This is a web graph between pages from University of Notre Dame (domain \texttt{nd.edu})
collected in 1999~\cite{dataset/notredame}.

\dataset{WikiTalk}
This is the on-line social network among editors of Wikipedia (\texttt{www.wikipedia.org})
created by communication on edits on talk pages by till January 2008~\cite{dataset/wikitalk1, dataset/wikitalk2}.

\dataset{Skitter}
This is an Internet topology graph created from
traceroutes run in 2005 by Skitter~\cite{dataset/skitter}.

\dataset{Indo}
This is a web graph between pages in \texttt{.in} domain
crawled in 2004~\cite{dataset/webgraph1, dataset/webgraph2}.

\dataset{MetroSec}
This is a graph constructed from Internet traffic captured by MetroSec.
Each vertex represents a computer and two vertices are linked
if they appear in a packet as sender and destination~\cite{dataset/metrosec}.

\dataset{Flickr}
This is the on-line social network in a photo-sharing site,
Flickr (\texttt{www.flickr.com})\cite{dataset/mpi}.

\dataset{Indochina}
This is a web graph of web pages in the country domains of Indochina countries,
crawled in 2004~\cite{dataset/webgraph1, dataset/webgraph2}.

\dataset{Hollywood}
This is a social network of movie actors.
Two actors are linked if they appeared in a movie together
by 2009~\cite{dataset/webgraph1, dataset/webgraph2}.
\fi%

\subsubsection{Statistics}

First,
we %
investigated the degree distribution of the networks,
since degrees of vertices play important roles in our method when we use \textsc{Degree} strategy for vertex ordering.
Figures~\ref{fig:degree1} and~\ref{fig:degree2}
are the log-log plot of degree complementary cumulative distribution.
As expected, we can confirm that all these networks generally exhibit power-law degree distributions.

Then,
we also examined the distribution of distances.
Figures~\ref{fig:distance1} and~\ref{fig:distance2}
show distribution of distances for 1,000,000 random pairs of vertices.
As we can observe from these figures,
these networks are also small-world networks,
in the sense that the average distance is very small.

\subsection{Performance}
\label{sec:experiments_performance}

First we present the performance of our method on the real-world datasets
to show the efficiency and robustness of our method.
Table~\ref{tbl:performance} shows the performance of our method for the datasets.
\textsc{IT} denotes preprocessing time, \textsc{IS} denotes index size,
\textsc{QT} denotes average query time for 1,000,000 random queries,
and \textsc{LN} denotes the average label size for each vertex,
in the format of the size of normal labels (left) plus the size of bit-parallel labels (right).
We set the number of times we conduct bit-parallel BFSs
as 16 for first five datasets and 64 for the rest.

In Table~\ref{tbl:performance},
we also listed the performance of two of the state-of-the-art existing methods.
One is \textit{hierarchical hub labeling}~\cite{practice/exact/Abraham2012},
which is also based on distance labeling.
The other one is based on tree decompositions~\cite{practice/exact/Akiba2012},
which is an improved version of TEDI~\cite{practice/exact/Wei2010}.
For these previous methods, we used the implementations by the authors of these methods, both in C++.
Experiments for hierarchical hub labeling
were conducted on a Windows server with two Intel Xeon X5680 (3.33GHz) and 96GB of main memory.
Experiments for the tree-decomposition-based method were conducted on our environment described above.
We also described the average time to compute distance by breadth-first search
for 1,000 random pairs of vertices.
Among these four methods including the proposed method,
only the preprocessing of hierarchical hub labeling~\cite{practice/exact/Abraham2012} was parallelized to use all the 12 cores.
All the other timing results are sequential.

\subsubsection{Preprocessing Time and Scalability}
Our emphasis is particularly on the large improvement in the preprocessing time,
leading to much better scalability.
First, we successfully preprocessed the largest two datasets Hollywood and Indochina
with millions of vertices and hundreds of millions of edges in moderate preprocessing time%
.
This is improvement of two orders of magnitude on the graph size we can handle
since,
as we listed in Table~\ref{tbl:methods},
other existing exact distance querying methods
take thousands or tens of thousands of seconds to preprocess graphs with millions of edges.

For next four datasets with tens of millions of edges,
it took less than one thousand seconds,
while the previous methods did not finish after one day or ran out of memory.
For smaller six datasets, they took at most one minute,
and about at least 50 times faster than the previous methods for the most of them.

\subsubsection{Query Time}
The average query time was generally microseconds and at most 16 microseconds.
For almost all the smaller five datasets,
the query time of the proposed method is faster than the query time of the previous methods.
Indeed, from Table~\ref{tbl:methods}, we can also observe that
the query time of our method is comparable to all the existing methods for graphs of these sizes.
Moreover, we can confirm that the query time does not increase much for larger networks.

\subsubsection{Index Size}
As for the smaller five networks,
results demonstrate that our method is comparable to the previous methods with respect to index size.
However,
even though nowadays computers with tens of gigabytes of memory are neither rare nor expensive,
reducing the index size can be an important next research issue.

\subsection{Analysis}%
Next we analyze the behavior of our method
to investigate why our method is efficient.

\subsubsection{Pruned BFS}
\label{sec:experiments_pruned_bfs}

First we study how labels are computed and stored.
Figure~\ref{fig:pruned_bfs} shows the number of distances
added to labels in each pruned BFS, and Figure~\ref{fig:pruned_bfs_cumulative}
shows the cumulative distribution of it, that is,
the ratio of the distances stored no later than each step
to all the distances stored in the end.
We did not use bit-parallel BFSs for these experiments.

From these figures, we can confirm the large impact of the pruning.
Figure~\ref{fig:pruned_bfs} shows that the number of distances added to labels in each BFS decreases so rapidly.
For example, after 1,000 times of BFSs,
for all the three datasets distances are added to the labels of only less than 10\% of the vertices,
and after conducting 10,000 times of BFSs,
for all the three datasets distances are added to the labels of only less than 1\% of the vertices.
Figure~\ref{fig:pruned_bfs_cumulative} also shows that
large portion of the labels are computed in the beginning.

\subsubsection{Sizes of Labels}
Figure~\ref{fig:labels} shows the distribution of the sizes of labels after the whole preprocessing,
sorted in the ascending order of sizes.
We can observe that the size of a label each vertex has
do not differ much for different vertices,
and few vertices have much larger labels than the average.
This shows that the query time of our method is quite stable.

If you are anxious about vertices with unusually large labels,
you can precompute the distance between these vertices and all the vertices
and answer it directly,
since the number of such vertices are few as shown in Figure~\ref{fig:labels}.

\subsubsection{Pair Coverage}
Figure~\ref{fig:coverage_avg}
illustrates the ratio of the \textit{covered} pairs of vertices,
that is, the pairs of vertices
whose distances can be answered correctly by current labels,
at each step.
We used 1,000,000 random pairs to estimate these ratios.
We can observe that most pairs are covered in the beginning.
This shows that such a large portion of pairs
have the shortest paths that pass
such a small portion of central vertices, which are selected by the \textsc{Degree} strategy.
This is the reason why landmark-based approximate methods have good precision,
and also the reason why our pruning works so effectively.

Figures~\ref{fig:coverage_gnutella}, \ref{fig:coverage_epinions} and \ref{fig:coverage_slashdot}
illustrate the ratio of the covered pairs of vertices at each step
with pairs classified by distance.
They show that
generally distant pairs are covered earlier than close pairs.
This is the reason why the precision of landmark-based approximate methods
for close pairs are far worse than the precision for distant pairs.
On the other hand, our method aggressively exploits this property:
because distant pairs are covered in the beginning,
we can prune distant vertices when processing other vertices,
which results in fast preprocessing.

\subsubsection{Vertex Ordering Strategies}
\label{sec:exp_ordering}

\begin{table}[tbp]
\center\small
\caption{Average size of a label for each vertex against different vertex ordering strategies.}
\begin{tabular}{|l|rrr|}
\hline
Dataset & \textsc{Random} & \textsc{Degree} & \textsc{Closeness} \\
\hline\hline
Gnutella & 6,171 & 781 & 865 \\
Epinions & 7,038 & 124 & 132 \\
Slashdot & 8,665 & 216 & 234 \\
NotreDame & \textsc{DNF} & 60 & 82 \\
WikiTalk & \textsc{DNF} & 118 & 158 \\
\hline
\end{tabular}
\label{tbl:strategies}
\vspace{-1em}
\end{table}

Next we see the effect of vertex ordering strategies.
Table~\ref{tbl:strategies} describes
the average size of a label for each vertex
using different vertex ordering strategies described in Section~\ref{sec:vertex_order}.
We did not use bit-parallel BFSs for these experiments.
As we can see, results are not so different between the \textsc{Degree} strategy
and the \textsc{Closeness} strategy. The \textsc{Degree} strategy might be slightly better.
On the other hand, the result of the \textsc{Random} strategy is much worse than other two strategies.
This shows that by the \textsc{Degree} and \textsc{Closeness} strategies
we can successfully capture central vertices.

\subsubsection{Bit-parallel BFS}
Finally, we see the effect of bit-parallel BFSs
discussed in Section~\ref{sec:bit-parallel}.
Figure~\ref{fig:bp} shows the performance of our method
against different number of times we conduct bit-parallel BFSs.

Figure~\ref{fig:bp_time} illustrates preprocessing time.
It shows that, with a proper number of bit-parallel BFSs,
preprocessing time gets two to ten times faster,
resulting in the further enhancement to the scalability of our method.
Figure~\ref{fig:bp_query} illustrates query time.
We can confirm that query time also gets faster.
Figure~\ref{fig:bp_labels} shows the average size of a normal label for each vertex.
As we increase the number of bit-parallel BFSs,
many pairs are covered by special labels computed by bit-parallel BFSs,
and the size of normal labels decreases.
Figure~\ref{fig:bp_size} shows the index size.
With a proper number of bit-parallel BFSs, index size also decreases. %

Another important finding from these figures is
that the performance of our method is not too sensitive to
the parameter of the number of bit-parallel BFSs.
As they show,
the performance of our method does not become worse much unless we choose a too big number.
The proper parameters seem to common between different networks.
Therefore,
our method still is easy to use with this bit-parallel technique.

\section{Conclusions}
\label{sec:conclusions}

In this paper, we proposed a novel and efficient method
for exact shortest-path distance queries on large graphs.
Our method is based on distance labeling to vertices,
which is common to the existing exact distance querying methods,
but our labeling algorithm stands on a totally new idea.
Our algorithm conducts breadth-first search (BFS) from all the vertices with pruning.
Though the algorithm is simple, our pruning surprisingly reduce the search space and the labels,
resulting in fast preprocessing time, small index size and fast query time.
Moreover, we also proposed another labeling scheme exploiting bit-level parallelism,
which can be easily combined with the pruned labeling method to further improve the performance.
Extensive experimental results on large-scale real-world networks of various types
demonstrated the efficiency and robustness of our methods.
In particular, our method can handle networks with hundreds of millions of vertices,
which are two orders of magnitude larger than the limits of the previous methods,
with comparable index size and query time.

\revision{
We plan to investigate ways to handle even larger graphs,
where indices and/or graphs might not fit in main memory.
The first way is to reduce the index size
by reducing graphs exploiting obvious parts and symmetry~\cite{practice/landmarks/Qiao2012, rel/graph_compression}
and compressing labels by making dictionaries of common subtrees for shortest path trees~\cite{practice/road/Abraham2011}.
Another way is disk-based or distributed implementation.
As we stated in Section~\ref{sec:variants},
disk-based query answering is obvious and ready,
and the challenges are particularly on preprocessing.
However, since our preprocessing algorithm is a simple algorithm based on BFS,
we can leverage the large body of existing work on BFS.
In particular,
since pruning can be done locally,
the preprocessing algorithm would perform well on
BSP-model-based distributed graph processing platforms~\cite{rel/pregel}.

}

\section{Acknowledgments}
We would like to thank Hiroshi Imai for critical reading of the manuscript,
and Daniel Delling for providing us the experimental results of hierarchical hub labeling~\cite{practice/exact/Abraham2012}.
We would also like to thank the anonymous reviewers for their constructive
suggestions on improving the paper.
Yuichi Yoshida is supported by
JSPS Grant-in-Aid for Research Activity Start-up (24800082), MEXT Grant-in-Aid for Scientific Research on Innovative Areas (24106001), and JST, ERATO, Kawarabayashi Large Graph Project.

\bibliographystyle{abbrv}
\bibliography{204}

\begin{thebibliography}{10}

\bibitem{practice/road/Abraham2011}
I.~Abraham, D.~Delling, A.~V. Goldberg, and R.~F. Werneck.
\newblock A hub-based labeling algorithm for shortest paths in road networks.
\newblock In {\em SEA}, pages 230--241, 2011.

\bibitem{practice/exact/Abraham2012}
I.~Abraham, D.~Delling, A.~V. Goldberg, and R.~F. Werneck.
\newblock Hierarchical hub labelings for shortest paths.
\newblock In {\em ESA}, pages 24--35. 2012.

\bibitem{rel/parallel_bfs}
V.~Agarwal, F.~Petrini, D.~Pasetto, and D.~A. Bader.
\newblock Scalable graph exploration on multicore processors.
\newblock In {\em SC}, pages 1--11, 2010.

\bibitem{practice/exact/Akiba2012}
T.~Akiba, C.~Sommer, and K.~Kawarabayashi.
\newblock Shortest-path queries for complex networks: exploiting low tree-width
  outside the core.
\newblock In {\em EDBT}, pages 144--155, 2012.

\bibitem{dataset/notredame}
R.~Albert, H.~Jeong, and A.~L. Barabasi.
\newblock {The diameter of the world wide web}.
\newblock {\em Nature}, 401:130--131, 1999.

\bibitem{apps/social_analysis/kdd06}
L.~Backstrom, D.~Huttenlocher, J.~Kleinberg, and X.~Lan.
\newblock Group formation in large social networks: membership, growth, and
  evolution.
\newblock In {\em KDD}, pages 44--54, 2006.

\bibitem{apps/comm/Boccaletti2006}
S.~Boccaletti, V.~Latora, Y.~Moreno, M.~Chavez, and D.~Hwang.
\newblock Complex networks: Structure and dynamics.
\newblock {\em Physics reports}, 424(4-5):175--308, 2006.

\bibitem{dataset/webgraph2}
P.~Boldi, M.~Rosa, M.~Santini, and S.~Vigna.
\newblock Layered label propagation: a multiresolution coordinate-free ordering
  for compressing social networks.
\newblock In {\em WWW}, pages 587--596, 2011.

\bibitem{dataset/webgraph1}
P.~Boldi and S.~Vigna.
\newblock The webgraph framework {I}: compression techniques.
\newblock In {\em WWW}, pages 595--602, 2004.

\bibitem{rel/core_fringe/physical00}
D.~S. Callaway, M.~E.~J. Newman, S.~H. Strogatz, and D.~J. Watts.
\newblock Network robustness and fragility: Percolation on random graphs.
\newblock {\em Physical Review Letters}, 85:5468--5471, 2000.

\bibitem{theory/power_law}
W.~Chen, C.~Sommer, S.-H. Teng, and Y.~Wang.
\newblock A compact routing scheme and approximate distance oracle for
  power-law graphs.
\newblock {\em TALG}, 9(1):4:1--26, 2012.

\bibitem{practice/exact/Cheng2009}
J.~Cheng and J.~X. Yu.
\newblock {On-line exact shortest distance query processing}.
\newblock In {\em EDBT}, pages 481--492, 2009.

\bibitem{practice/exact/Cohen2002}
E.~Cohen, E.~Halperin, H.~Kaplan, and U.~Zwick.
\newblock {Reachability and distance queries via 2-hop labels}.
\newblock In {\em SODA}, pages 937--946, 2002.

\bibitem{rel/graph_compression}
W.~Fan, J.~Li, X.~Wang, and Y.~Wu.
\newblock Query preserving graph compression.
\newblock In {\em SIGMOD}, pages 157--168, 2012.

\bibitem{practice/landmarks/Gubichev2010}
A.~Gubichev, S.~Bedathur, S.~Seufert, and G.~Weikum.
\newblock Fast and accurate estimation of shortest paths in large graphs.
\newblock In {\em CIKM}, pages 499--508, 2010.

\bibitem{apps/topk/sigmod07}
H.~He, H.~Wang, J.~Yang, and P.~S. Yu.
\newblock Blinks: ranked keyword searches on graphs.
\newblock In {\em SIGMOD}, pages 305--316, 2007.

\bibitem{practice/exact/Jin2012}
R.~Jin, N.~Ruan, Y.~Xiang, and V.~Lee.
\newblock A highway-centric labeling approach for answering distance queries on
  large sparse graphs.
\newblock In {\em SIGMOD}, pages 445--456, 2012.

\bibitem{theory/centroid_decomposition}
C.~Jordan.
\newblock {Sur les assemblages de lignes}.
\newblock {\em J. Reine Angew Math}, 70:185--190, 1869.

\bibitem{apps/social_analysis/kdd03}
D.~Kempe, J.~Kleinberg, and E.~Tardos.
\newblock Maximizing the spread of influence through a social network.
\newblock In {\em KDD}, pages 137--146, 2003.

\bibitem{dataset/wikitalk2}
J.~Leskovec, D.~Huttenlocher, and J.~Kleinberg.
\newblock {Predicting positive and negative links in online social networks}.
\newblock In {\em WWW}, pages 641--650, 2010.

\bibitem{dataset/wikitalk1}
J.~Leskovec, D.~Huttenlocher, and J.~Kleinberg.
\newblock {Signed networks in social media}.
\newblock In {\em CHI}, pages 1361--1370, 2010.

\bibitem{dataset/skitter}
J.~Leskovec, J.~Kleinberg, and C.~Faloutsos.
\newblock Graphs over time: Densification laws, shrinking diameters and
  possible explanations.
\newblock In {\em KDD}, pages 177--187, 2005.

\bibitem{dataset/slashdot}
J.~Leskovec, K.~Lang, A.~Dasgupta, and M.~Mahoney.
\newblock Community structure in large networks: Natural cluster sizes and the
  absence of large well-defined clusters.
\newblock {\em Internet Mathematics}, 6(1):29--123, 2009.

\bibitem{dataset/metrosec}
C.~Magnien, M.~Latapy, and M.~Habib.
\newblock Fast computation of empirically tight bounds for the diameter of
  massive graphs.
\newblock {\em J. Exp. Algorithmics}, 13:10:1.10--10:1.9, Feb. 2009.

\bibitem{rel/pregel}
G.~Malewicz, M.~H. Austern, A.~J. Bik, J.~C. Dehnert, I.~Horn, N.~Leiser, and
  G.~Czajkowski.
\newblock Pregel: a system for large-scale graph processing.
\newblock In {\em SIGMOD}, pages 135--146, 2010.

\bibitem{dataset/mpi}
A.~Mislove, M.~Marcon, K.~P. Gummadi, P.~Druschel, and B.~Bhattacharjee.
\newblock {Measurement and analysis of online social networks}.
\newblock In {\em IMC}, pages 29--42, 2007.

\bibitem{rel/core_fringe/physical01}
M.~E.~J. Newman, S.~H. Strogatz, and D.~J. Watts.
\newblock Random graphs with arbitrary degree distributions and their
  applications.
\newblock {\em Physical Review E}, 64(2):026118 1--17, 2001.

\bibitem{apps/comm/Pastor-Satorras2004}
R.~Pastor-Satorras and A.~Vespignani.
\newblock {\em Evolution and structure of the Internet: A statistical physics
  approach}.
\newblock Cambridge University Press, 2004.

\bibitem{practice/landmarks/Potamias2009}
M.~Potamias, F.~Bonchi, C.~Castillo, and A.~Gionis.
\newblock Fast shortest path distance estimation in large networks.
\newblock In {\em CIKM}, pages 867--876, 2009.

\bibitem{practice/landmarks/Qiao2012}
M.~Qiao, H.~Cheng, L.~Chang, and J.~X. Yu.
\newblock Approximate shortest distance computing: A query-dependent local
  landmark scheme.
\newblock In {\em ICDE}, pages 462--473, 2012.

\bibitem{apps/bio/Rahman2005}
S.~A. Rahman, P.~Advani, R.~Schunk, R.~Schrader, and D.~Schomburg.
\newblock Metabolic pathway analysis web service (pathway hunter tool at
  cubic).
\newblock {\em Bioinformatics}, 21(7):1189--1193, 2005.

\bibitem{apps/bio/Rahman2006}
S.~A. Rahman and D.~Schomburg.
\newblock Observing local and global properties of metabolic pathways: `load
  points' and `choke points' in the metabolic networks.
\newblock {\em Bioinformatics}, 22(14):1767--1774, 2006.

\bibitem{dataset/epinions}
M.~Richardson, R.~Agrawal, and P.~Domingos.
\newblock Trust management for the semantic web.
\newblock In {\em ISWC}, volume 2870, pages 351--368. 2003.

\bibitem{dataset/gnutella}
M.~Ripeanu, A.~Iamnitchi, and I.~Foster.
\newblock Mapping the gnutella network.
\newblock {\em IEEE Internet Computing}, 6(1):50--57, Jan. 2002.

\bibitem{theory/td/Robertson1984}
N.~Robertson and P.~D. Seymour.
\newblock {Graph minors. III. Planar tree-width}.
\newblock {\em J. Comb. Theory, Ser. B}, 36(1):49--64, 1984.

\bibitem{practice/landmarks/Tang2003}
L.~Tang and M.~Crovella.
\newblock Virtual landmarks for the internet.
\newblock In {\em SIGCOMM}, pages 143--152, 2003.

\bibitem{apps/topk/icde09}
T.~Tran, H.~Wang, S.~Rudolph, and P.~Cimiano.
\newblock Top-k exploration of query candidates for efficient keyword search on
  graph-shaped ({RDF}) data.
\newblock In {\em ICDE}, pages 405--416, 2009.

\bibitem{practice/landmarks/Tretyakov2011}
K.~Tretyakov, A.~Armas-Cervantes, L.~Garc\'{\i}a-Ba\~{n}uelos, J.~Vilo, and
  M.~Dumas.
\newblock Fast fully dynamic landmark-based estimation of shortest path
  distances in very large graphs.
\newblock In {\em CIKM}, pages 1785--1794, 2011.

\bibitem{apps/context_search/cikm2008}
A.~Ukkonen, C.~Castillo, D.~Donato, and A.~Gionis.
\newblock Searching the wikipedia with contextual information.
\newblock In {\em CIKM}, pages 1351--1352, 2008.

\bibitem{practice/landmarks/Vieira2007}
M.~V. Vieira, B.~M. Fonseca, R.~Damazio, P.~B. Golgher, D.~d.~C. Reis, and
  B.~Ribeiro-Neto.
\newblock Efficient search ranking in social networks.
\newblock In {\em CIKM}, pages 563--572, 2007.

\bibitem{practice/exact/Wei2010}
F.~Wei.
\newblock Tedi: efficient shortest path query answering on graphs.
\newblock In {\em SIGMOD}, pages 99--110, 2010.

\bibitem{apps/social_search/vldb2008}
S.~A. Yahia, M.~Benedikt, L.~V.~S. Lakshmanan, and J.~Stoyanovich.
\newblock Efficient network aware search in collaborative tagging sites.
\newblock {\em PVLDB}, 1(1):710--721, 2008.

\end{thebibliography}

\end{document}